\newcommand{\num}{0.050in}
\newcommand{\tnum}{\num*2}
\newtheorem{observation}{Observation}
\newcommand{\FSM}{DFSM}
\newcommand{\e}{e}
\newcommand{\FSMs}{DFSMs}
\newcommand{\efr}{\end{flushright}}
\newcommand{\bfr}{\begin{flushright}}
\newcommand{\h}{\hspace*{0.15in}}
\newcommand{\RCP}{{RCP}}
\newcommand{\State}{\bigtriangleup s}
\newcommand{\Event}{\bigtriangleup e}
\newcommand{\fusion}[2]{($#1$, $#2$)-fusion}
\newcommand{\decomp}[2]{(#1,#2)-event decomposition}
\newcommand{\s}[3]{#1^{#2}_{#3}}
\newcommand{\w}{d_{min}} 
\newcommand{\recover}{recovery agent}
\newcommand{\myfloor}[1]{\lfloor #1 \rfloor}
\newcommand{\m}{\mathcal}
\newcommand{\col}{\textcolor{black}}
\begin{document}
\title{Fault Tolerance in Distributed Systems using Fused State Machines}
\author{Bharath Balasubramanian \and Vijay K. Garg \thanks{*This research was supported in part by the NSF Grants
CNS-0718990, CNS-0509024, CNS-1115808 and Cullen Trust for Higher
Education Endowed Professorship.
}
}
\institute{Bharath Balasubramanian 
\at EDGE Lab, Dept. of Electrical Engineering,\\
 Princeton University, \\
Engineering Quadrangle, Olden Street, \\
Princeton, NJ 08544. \\
Tel.: +1 512 239 8104, \email{bbharath@utexas.edu}. \and
Vijay K. Garg 
              \at Parallel and Distributed Systems Laboratory, \\
Dept. of Electrical and Computer Engineering, \\
The University of Texas at Austin, \\
1 University Station, C0803, \\
Austin, TX 78712-0240.\\
 Tel.: +1 512 471 9424, \email{garg@ece.utexas.edu}. 
}
\maketitle

\begin{abstract}
Replication is a standard technique for fault tolerance in distributed systems modeled as
deterministic finite state machines (\FSMs{} or machines). To correct $f$ crash or $\lfloor f/2
\rfloor$ Byzantine faults among $n$
different machines, replication requires $nf$ additional backup machines. We 
present a solution called \emph{fusion} that requires just $f$ additional backup machines.
First, we build a framework for fault tolerance in \FSMs{} based on the
notion of Hamming distances. We introduce the concept of an \fusion{f}{m},
which is a set of $m$ backup machines that can correct $f$ crash faults or $\lfloor f/2 \rfloor$
Byzantine faults among a given set of machines.
Second, we present an algorithm to generate an \fusion{f}{f} for a given set of machines. We ensure
that our backups are efficient in terms of the size of their state and event sets. Third, we use \emph{locality sensitive hashing} 
for the detection and correction of faults that incurs almost the same overhead as that for
replication. We detect Byzantine faults with time complexity $O(n f)$ on average while we correct
crash and Byzantine faults with time complexity $O(n  \rho  f)$ with high
probability, where $\rho$ is the average state reduction
achieved by fusion. Finally, our evaluation of  
fusion on the widely used MCNC'91 benchmarks for \FSMs{} show that the average state space savings
in fusion (over replication) is 38\% (range 0-99\%). To demonstrate the practical use
of fusion, we describe its potential application to the MapReduce framework. Using a simple case
study, we compare replication and fusion as applied to this framework. While a pure
replication-based solution requires
1.8 million map tasks, our fusion-based solution requires only 1.4 million map tasks with minimal overhead
during normal operation or recovery. Hence, fusion results in
considerable savings in state space and other resources such as the power needed to run the backup
tasks.  
\end{abstract}
\keywords{Distributed Systems, Fault Tolerance, Finite State Machines, Coding Theory, Hamming Distances.} 
\section{Introduction}\label{secIntro}
Distributed applications often use deterministic finite state machines (referred to as \FSMs{} or machines) 
to model computations such as regular expressions for pattern detection,
syntactical analysis of documents or mining algorithms for large data sets. These machines executing
on distinct distributed processes are often prone to faults. Traditional solutions to this problem
involve some form of replication. To correct $f$ crash
faults \cite{Sch84} among $n$ given machines (referred to as \emph{primaries}), $f$ copies of each primary are maintained
\cite{Lamp78Reliable,fathi04Replication,schneider90implementing}. If the backups start from the same initial state
as the corresponding primaries and act on the same events, then in the case of faults, the state of
the failed machines can be recovered from one of the remaining copies. These backups can also
correct $\myfloor{f/2}$ Byzantine faults \cite{LaSh82}, where the processes lie about the state of the machine,
since a majority of truthful machines is always available. This approach, requiring $nf$ total
backups, is expensive
both in terms of the state space of the backups and other resources such as the power needed to run
these backups.  
\begin{figure*}[ht]
\centering
\includegraphics[scale=0.50]{./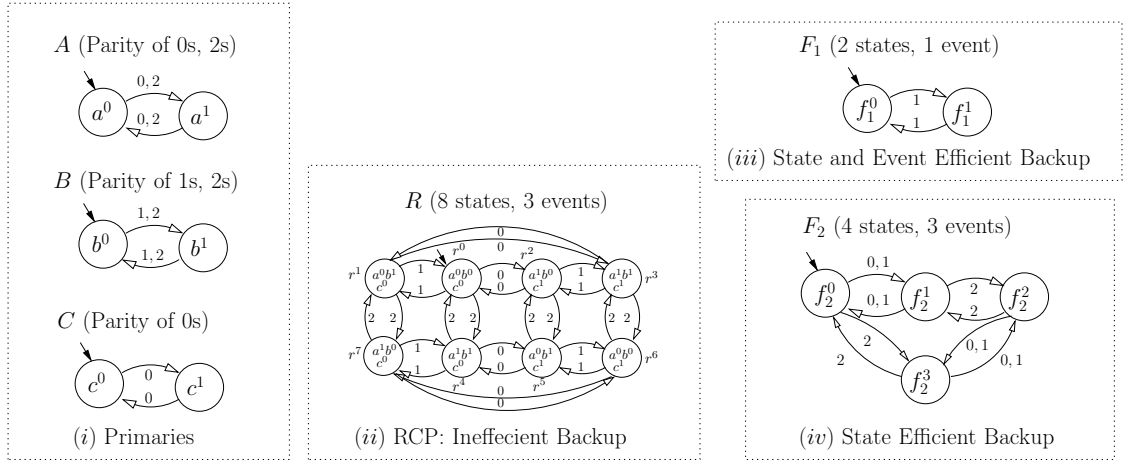}
\caption{Correcting one crash fault among $\{A,B,C\}$ using just one additional backup rather than
three backups required by replication.}
\label{figMainExample}
\end{figure*}

Consider a distributed application that is searching for three different 
string patterns in a file. These string patterns or regular expressions are usually modeled as
\FSMs{}.  Consider the state machines $A$, $B$ and $C$ shown in
Fig. \ref{figMainExample}. A state machine in our system consists of a finite set of states
and a finite set of events. On application of an event, the state machine transitions to the next state based on
the state-transition function. For example, machine $A$ in Fig. \ref{figMainExample} contains the
states $\{a^0,a^1\}$, events $\{0,2\}$ and the initial state, shown by the dark ended arrow, is $a^0$.
The state transitions are shown by the arrows from one state to another. Hence, if $A$ is in state
$a^0$ and event $0$ is applied to it, then it transitions to state $a^1$. 
In this example, $A$ checks the parity of $\{0,2\}$ and so, if it is in state $a^0$, then an even
number of $0s$ or $2s$ have been applied to the machine and if it is in state $a^1$, then an odd
number of the inputs have been applied.  Machines $B$ and $C$ check for the parity of $\{1,2\}$ and
$\{0\}$ respectively. 

To correct one crash fault among these machines, replication requires a copy
of each of them, resulting in three backup machines, consuming total state space of
eight ($2^3$). Another way of looking at replication in \FSMs{} is by constructing a backup machine that is the
\emph{reachable cross product} or $\RCP$ (formally defined in section \ref{secRCP}) of the original
machines. As shown in Fig. \ref{figMainExample}, each state of the $\RCP$, denoted by $R$, is a
tuple, in which the elements corresponds to the states of $A$, $B$ and $C$ respectively. Let each of the machines $A$, $B$, $C$
and $R$ start from their initial state. If some event sequence (generated by the
client/environment) $0 \rightarrow 2
\rightarrow 1$ is applied on 
these machines, then the state of $R$, $A$, $B$ and $C$ are $r^6=\{a^0b^0c^1\}$, $a^0$, $b^0$ and
$c^1$ respectively. Here, even if one of the primaries crash, using the state of $R$, we can
determine the state of the crashed primary. Hence, the
$\RCP$ is a valid backup machine.  

However, using the $\RCP$ of the primaries as a backup has two major disadvantages: $(i)$ Given $n$ primaries each
containing $O(s)$ states, the number of states in the $\RCP$ is $O(s^n)$, which is \emph{exponential} in the number of
primaries. In Fig.
\ref{figMainExample},  $R$ has eight states. $(ii)$ The event set of the $\RCP$ is the union of the event sets of the primaries. In
Fig. \ref{figMainExample}  while $A$, $B$ and $C$ have only two, two and one event respectively in their
event sets, $R$ has
three events. This translates to increased load on the backup. Can we generate backup machines that
are more efficient than the $\RCP$ in terms of states and events? 

%
Consider $F_1$ shown in Fig. 
\ref{figMainExample}. If the event sequence $0 \rightarrow 0 \rightarrow 1 \rightarrow 2$ is applied the machines, 
$A$, $B$, $C$ and $F_1$, then they will be in states $a^1$, $b^0$, $c^0$ and $f^1_1$. 
Assume a crash fault in $C$. Given the parity of $1$s (state of
$F_1$) and the parity of 1s or 2s
(state of $B$), we can first determine the parity of 2s. Using this, and the parity of
0s or 2s (state of $A$), we can determine the parity of 0s (state of $C$). Hence, we can
determine the state of $C$ as $c^0$ using the states of $A$, $B$ and $F_1$. This argument can be extended to correcting
one fault among any of the machines in $\{A,B,C,F_1\}$. This
approach consumes fewer backups than replication (one vs. three), fewer states 
than the $\RCP$ (two states vs. eight states) and fewer number of events than the $\RCP$ (one event vs. three
events). How can we generate
such a backup for any arbitrary set of machines? In Fig. \ref{figMainExample}, can $F_1$ and $F_2$
correct two crash faults among the primaries? Further, how do we correct the faults? In this
paper, we address such questions through the following contributions: 

\paragraph{Framework for Fault Tolerance in \FSMs{}} 
We explore the idea of a fault graph and use that to define the minimum Hamming
distance \cite{hamming50} for a set of machines. Using this framework, we can specify the
exact number of crash or Byzantine faults a set of machines can correct. Further, we introduce 
the concept of an \emph{\fusion{f}{m}} which is a set of $m$ machines that can correct $f$ crash
faults, detect $f$ Byzantine faults or correct
$\lfloor f/2 \rfloor$ Byzantine faults. We refer to the machines as \emph{fusions} or \emph{fused
backups}. In Fig. \ref{figMainExample}, $F_1$ and $F_2$ can correct
two crash faults among $\{A,B,C\}$ and hence $\{F_1,F_2\}$ is a \fusion{2}{2} of $\{A,B,C\}$. Replication is just a special case of
\fusion{f}{m} where $m = nf$. We prove properties on the \fusion{f}{m} for a given set of
primary machines including lower bounds for the existence of such fusions.

\paragraph{Algorithm to Generate Fused Backup Machines} 
Given a set of $n$ primaries we present an algorithm that generates an \fusion{f}{f} corresponding
to them, i.e., we generate a set of $f$ backup machines that can correct $f$ crash or $\lfloor f/2
\rfloor$ Byzantine faults among them. We show that our backups are efficient in terms of: $(i)$ The number
of states in each backup $(ii)$ The number of events in each backup $(iii)$ The \emph{minimality}
(defined in section \ref{secFusion}) of the entire
set of backups in terms of states. Further, we show that if our algorithm does not achieve state and
event reduction, then no solution with the same number of backups 
achieves it. Our algorithm has time complexity polynomial in $N$, where $N$ is
the number of states in the $\RCP$ of the primaries. We present an incremental approach to this
algorithm that improves the time complexity
by a factor of $O(\rho^n)$, where $\rho$ is the average state savings achieved by fusion. 

%

\paragraph{Detection and Correction of Faults} We present a
Byzantine detection algorithm with time complexity $O(nf)$ on average, which is the same as the time
complexity of detection for replication. Hence, for a system that needs to periodically detect
liars, fusion causes no additional overhead.  We reduce the problem of fault correction to one of
finding points within a certain Hamming distance of a given query point in $n$-dimensional space and
present algorithms to correct crash and Byzantine faults with time complexity $O(n  \rho  f)$ with
high probability (w.h.p). The time complexity for crash and Byzantine correction in replication is
$O(f)$ and $O(nf)$ respectively. Hence, for small values of $n$ and $\rho$, fusion causes
almost no overhead for recovery.  Table \ref{tabNotation} describes the main symbols used in this
paper, while Table \ref{tableComparison} summarizes the main results in the paper through a comparison
with replication.

\begin{table*}[ht]\centering
\caption{Symbols/Notation used in the paper}
{\small
\begin{tabular}{|c|c||c|c|}
\hline
$\mathcal{P}$& Set of primaries & $n$ & Number of primaries\\
\hline
$\RCP$ & Reachable Cross Product & $N$ & Number of states in the \RCP\\
\hline
$f$ & No. of crash faults & $s$ & Maximum number of states among primaries\\
\hline
$\mathcal{F}$ & Set of fusions/backups& $\rho$ & Average State Reduction in fusion\\
\hline 
$\Sigma$ & Union of primary event-sets& $\beta$ & Average Event Reduction in fusion\\
\hline
\end{tabular}\label{tabNotation}
}
\end{table*}

\begin{table*}[htp]
\centering
\caption{Replication vs. Fusion (Columns 2 and 3 for $f$ crash faults, 4 and 5 for $f$ Byzantine faults)}
{\small
\begin{tabular}{|c|p{1.0in}|p{1.0in}||p{1.0in}|p{1.0in}|}
\hline
{}& \bf{Rep-Crash}& \bf{Fusion-Crash} & \bf{Rep-Byz}& \bf{Fusion-Byz} \\
\hline
{Number of Backups}& $nf$ & $f$ & $2nf$ &  $2f$ \\
\hline
{Backup State Space}& $s^{nf}$ & $(s^n/\rho)^{f}$ & $s^{2nf}$ & $(s^n/\rho)^{2f}$ \\
\hline
{Average Events/Backup}& $|\Sigma|/n$ & $ |\Sigma|/\beta$ & $|\Sigma|/n$ & $ |\Sigma|/\beta$  \\
\hline
{Fault Detection Time}& $O(1)$ &  $O(1)$ & $O(nf)$  & $O(nf)$ (on avg.)\\
\hline
{Fault Correction Time}& $O(f)$ & $O(n \rho f)$ w.h.p & $O(nf)$ & $O(n \rho f)$ w.h.p \\
\hline
{Fault Detection Messages}& $O(1)$ & $O(1)$ & $2nf$ & $n+f$ \\
\hline
{Fault Correction Messages}& $f$ & $n$ & $n+2f$ & $n+f$ \\
\hline
{Backup Generation Time Complexity}& $O(nsf)$  & $O(s^n|\Sigma|  f/\rho^n)$ & $O(nsf)$  & $O(s^n|\Sigma|  f/\rho^n)$  \\
\hline
\end{tabular}\label{tableComparison}\\
}
\end{table*}
%
\paragraph{Fusion-based Grep in the MapReduce Framework} 
To illustrate the practical use of fusion, we consider its potential application to the
\emph{grep} functionality of the MapReduce framework \cite{Dean2008}. The MapReduce framework is a
prevalent solution to model large scale distributed computations. The grep functionality is used in
many applications that need to identify patterns in huge textual data such as data mining, machine
learning and query log analysis. Using a simple
case study, we show that a pure replication-based approach for fault tolerance
needs 1.8 million map tasks while our fusion-based solution requires
only 1.4 million map tasks. Further, we show that our approach causes minimal overhead during normal
operation or recovery.   

\paragraph{Fusion-based Design Tool and Experimental Evaluation} We provide a Java design tool based on our
fusion algorithm, that takes a set
of input machines and generates fused backup machines corresponding to them. We evaluate our fusion algorithm on 
the MCNC'91 \cite{Yang91logicsynthesis} benchmarks for \FSMs{}, that are widely used in the fields
of logic synthesis and circuit design. 
Our results show that the average state space savings
in fusion (over replication) is 38\% (range 0-99\%), while the average event-reduction is 4\% (range 0-45\%). 
Further, the average savings in time by the incremental approach for generating the fusions (over
the non-incremental approach) is 8\%. 

\col{In section \ref{secFsmModel}, we specify the system model and assumptions of our work. In section
\ref{secFsmFaultTolerance} we describe the theory of our backup or fusion
machines. Following this, we present algorithms to generate these fusion machines in section
\ref{secSpaceEventFusions}. In section \ref{secFaults} we present the
algorithms for the detection and correction of faults in a system with primary and fusion machines.
Sections \ref{secMapReduce} and \ref{secEvaluate} deal with the practical aspects and experimental
evaluation of fusion. In section \ref{secMacOutLat}, we consider potential solutions to this
problem, outside the framework of this paper. Section \ref{secRelatedWork} covers the related work
in this area. Finally, we summarize our work and discuss future extensions in section \ref{secConc}.}

\section{Model}\label{secFsmModel}
The \FSMs{} in our system execute on separate distributed processes. We assume
loss-less FIFO communication links  with a strict upper bound
on the time taken for message delivery.  Clients of the state machines issue the events (or commands) to the
concerned primaries and backups. \col{For simplicity, we assume that there is a single client issuing the
events to the machines. This along with FIFO links ensures that all machines act on the events in
the same relative order. This can be extended to multiple clients using standard total
order broadcast mechanisms present in the literature \cite{Defago2004,MelliarSmith1990}. 
}

The \emph{execution state} of a machine is the current state in
which it is executing. Faults in our system are of two types: crash faults,
resulting in a loss of the execution state  of the machines and Byzantine faults resulting in an
arbitrary execution state. We assume that the given set of primary machines cannot correct a single
crash fault amongst themselves. When faults are detected by a trusted recovery agent using timeouts (crash
faults) or a detection algorithm (Byzantine faults) no further events are sent by any client to
these machines. Assuming the machines have acted on the same sequence of events, 
the \recover{} obtains their states, and recovers the correct
execution states of all faulty machines. 

%

%

\section{Framework for Fault Tolerance in \FSMs{}}\label{secFsmFaultTolerance}

In this section, we describe the framework using which we can
specify the exact number of crash or Byzantine faults that any set of machines can correct. Further,
we introduce the concept of an \fusion{f}{m} for a set of primaries that is a set of
machines that can correct $f$ crash faults, detect $f$ Byzantine faults and correct $\lfloor f/2
\rfloor$ Byzantine faults. 

\subsection{\FSMs{} and their Reachable Cross Product}\label{secRCP}

A \FSM{}, denoted by ${A}$, consists of a set of states $X_A$, set of events $\Sigma_A$, transition
function $\alpha_A:X_A \times \Sigma_A \rightarrow X_A$ and initial state $\s{a}{0}{}$.  The size of
$A$, denoted by $|A|$ is the number of states in $X_A$. A state, $s \in X_A$, is \emph{reachable} iff
there exists a sequence of events, which, when applied on the initial state $\s{a}{0}{}$, takes the
machine to state $s$. 
Consider any two machines, $A$ $(X_A,\:\Sigma_A,\:\alpha_A,\:\s{a}{0}{})$ and $B$
$(X_B,\:\Sigma_B,\:\alpha_B,\:\s{b}{0}{})$. Now construct another machine which consists of all the
states in the product set of $X_A$ and $X_B$  with the transition function $\alpha '
(\{a,b\},\sigma) = \{ \alpha_A(a,\sigma) , \alpha_B(b,\sigma)\}$ for all $\{a,b\} \in X_A \times
X_B$ and $\sigma \in \Sigma_A \cup \Sigma_B$. This machine $( X_A \times X_B,\:\Sigma_A \cup
\Sigma_B,\:\alpha ',\: \{ \s{a}{0}{}, \s{b}{0}{} \} )$  may have states that are not reachable from
the initial state $\{ \s{a}{0}{}, \s{b}{0}{} \}$. If all such unreachable states are pruned, we get
the \emph{reachable cross product} of $A$ and $B$. In 
Fig. \ref{figMainExample}, $R$ is the reachable cross product of 
$A$, $B$ and $C$. Throughout the paper, when we just say $\RCP$, we refer to the reachable cross
product of the set of primary machines. Given a set of primaries, the number of states in its 
$\RCP$ is denoted by $N$ and its event set, which is the union of the event sets of the primaries is
denoted by $\Sigma$. 

As seen in section \ref{secIntro}, given
the state of the $\RCP$, we can determine the state of each of the
primary machines and vice versa. However, the $\RCP$ has states exponential in $n$ and an event set  that is the union of all primary
event sets. \emph{Can we generate machines that contains fewer states
and events than the $\RCP$?} In the
following section, we first define the notion of order and the `less than or equal to' $(\leq)$ relation among machines.

%

%

\subsection{Order Among Machines and their Closed Partition Lattice}\label{secOrder}

\col{Consider a \FSM{}, $A =
(X_A,\Sigma,\alpha_{A},\s{x}{0}{A})$. A \emph{partition} $P$, on the state set $X_A$ of $A$ is the set $\{B_1,\ldots,B_k\}$,  of disjoint subsets of the
state set $X_A$, such that $\bigcup^k_{i=1} B_i = X_A$ and $B_i \cap B_j = \phi$ for $i \neq j$
\cite{LeeYann2002}. An element $B_i$ of a partition is called a \emph{block}. 
A partition, $P$, is said to be closed if each event, $\sigma \in \Sigma$, maps a block of $P$ into
another block. A closed partition $P$, corresponds to a distinct machine. 
Given any machine $A$, we can partition its state space such that the transition function $\alpha_A$, maps each block of the
partition to another block for all events in $\Sigma_A$ \cite{HartSteBook,LeeYann2002}.}

In other
words, we combine the states of $A$ to generate machines that are consistent with the transition
function. We refer to the set of all such closed partitions as the closed
partition set of $A$. In this paper, we discuss the closed partitions
corresponding to the $\RCP$ of the primaries. In Fig. \ref{figStEvRedLattice}, we show the closed
partition set of the $\RCP$ of $\{A,B,C\}$ (labeled $R$). Consider machine $M_2$ in Fig.
\ref{figStEvRedLattice}, generated by combining the states $r^0$ and $r^2$ of $R$. Note that, on
event 1, $r^0$ transitions to
$r^1$ and $r^2$ transitions to $r^3$. Hence, we need to combine the states $r^1$ and $r^3$. Continuing
this procedure, we obtain the combined states in $M_2$. Hence, we have \emph{reduced} the $\RCP$ to
generate $M$. By combining different pairs of states and
by further reducing the machines thus formed, we can construct the entire closed partition set of $R$. 

\begin{figure*}[htb] 
\centerline{ 
\scalebox{0.40}{ 
\includegraphics{./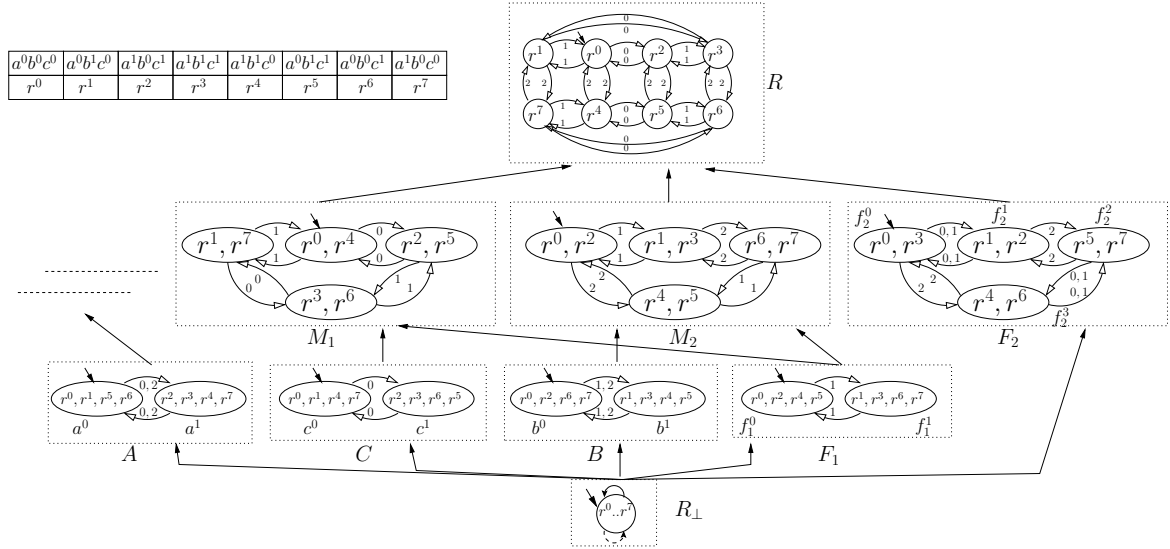} } } 
\caption{Set of machines less than $R$ (all machines not shown due to space constraints).}
\label{figStEvRedLattice} \end{figure*}

We can define an
order ($\leq$) among any two machines $P$ and $Q$ in this set as follows: $P \leq Q$, if each block
of $Q$ is contained in a block of $P$ (shown by an arrow from $P$ to $Q$). Intuitively, given the state of
$Q$ we can determine the state of $P$. Machines  $P$ and
$Q$ are incomparable, i.e., $P || Q$, if $P \not< Q$ and $Q \not< P$. In Fig. \ref{figStEvRedLattice}, $F_1 < M_2$, while $M_1 || M_2$. It can be seen that the set of all closed partitions
corresponding to a machine, form a lattice under the $\leq$ relation \cite{HartSteBook}. We saw in
section \ref{secRCP} that given the state of the primaries, we can determine the state of the $\RCP$
and vice versa. Hence, the primary machines are always part of the closed partition set of the
$\RCP$ (see $A$, $B$ and $C$ in Fig. \ref{figStEvRedLattice}). 

Among the machines shown in
Fig. \ref{figStEvRedLattice}, some of them, like $F_2$ (4 states, 3 events) have reduced states, while some like $M_1$ (4 states, 2
events) and $F_1$ (2 states, 1 event) have
both reduced states and events as compared to $R$ (8 states, 3 events). \emph{Which among these
machines can act as backups?} In
the following section, we describe the concept of fault graphs and their Hamming distances 
to answer this question.

\subsection{Fault Graphs and Hamming Distances}\label{secGraphHamming}

We begin with the idea of a \emph{fault graph} of a set of machines  $\mathcal{M}$, for a machine
$T$, where all machines in $\mathcal{M}$ are less than or equal to $T$. This is a weighted graph and
is denoted by $G(T,\mathcal{M})$. The fault graph is an indicator of the capability of the set of
machines in $\mathcal{M}$ to correctly identify the current state of $T$. As described in the
previous section, since all the machines in $\mathcal{M}$ are less than or equal to $T$, the set of
states of any machine in $\mathcal{M}$ corresponds to a closed partition of the set of states of
$T$. Hence, given the state of $T$, we can determine the state of all the machines in $\m{M}$ and
vice versa. 

\begin{definition}\label{defFaultGraph} (Fault Graph)
Given a set of machines $\mathcal{M}$ and a machine $T= (X_T,\Sigma_T,\alpha_T,\s{t}{0}{})$ such that
$\forall M \in \mathcal{M}: M \leq T$, the fault graph $G(T, \mathcal{M})$ is a \emph{fully
connected weighted graph} where,
\begin{itemize}
\item Every node of the graph corresponds to a state in $X_T$
\item  The weight of the edge $(t^i,t^j)$ between two nodes, where $t^i,t^j \in X_T$, is the number
of machines in $\mathcal{M}$ that have states $t^i$ and $t^j$ in distinct blocks 
\end{itemize}
\end{definition}

 We construct the fault graph $G(R, \{A\})$, referring to Fig. \ref{figStEvRedLattice}. $A$ has
two states, $a^0=\{r^0, r^1,r^5,r^6\}$ and  $a^1=\{r^2,r^3,r^4,r^7\}$.  Given just the current state
of $A$, it is possible to determine if $R$ is in state $r^0$ or $r^2$ (exact) or one of $r^0$ and
$r^1$ (ambiguity). Here, $A$ distinguishes between the $(r^0,r^2)$ but not between $(r^0,r^1)$. 
Hence, in the fault graph $G(R, \{A\})$ in Fig. \ref{figFaultGraph} $(i)$,
the edge $(r^0,r^2)$ has weight one, while $(r^0,r^1)$ has weight zero. 
A machine $M \in \m{M}$, is said to \emph{cover} an edge $(t^i,t^j)$ if $t^i$ and $t^j$ lie in
separate blocks of $M$, i.e., $M$ \emph{separates} the states $t^i$ and $t^j$. In Fig. \ref{figStEvRedLattice}, $A$ covers 
$(r^0,r^2)$. In Fig.
\ref{figOldClosedPartitionLattice} and \ref{figOldFaultGraph} of the Appendix, we show an example of the closed
partition set and fault graphs for a different set of primaries. 

Given the states of  $|\mathcal{M}| - x$  machines in $|\mathcal{M}|$, it is always possible to
determine if $T$ is in state $t^i$ or $t^j$ iff the weight of the edge  $(t^i,t^j)$ is greater than
$x$. Consider the graph shown in Fig. \ref{figFaultGraph} $(ii)$. Given the state of any two 
machines in $\{A,B,C\}$, we can determine if $R$ is in state $r^0$ or $r^2$, since the weight of that
edge is greater than one, but cannot do the same for the edge $(r^0, r^1)$, since the weight of the
edge is one. In coding theory \cite{BerleCoding68,PetersonCodes72}, the concept of Hamming distance \cite{hamming50} is widely used to specify the
fault tolerance of an erasure code. If an erasure code has minimum Hamming distance greater than $d$, then it
can correct $d$ erasures or $\lfloor d/2 \rfloor$ errors. To understand the fault
tolerance of a set of machines, we define a similar notion of distances for the fault graph.

\begin{definition}(distance) Given a set of machines $\mathcal{M}$
and their reachable cross product $T$ $(X_T,\Sigma_T,\alpha_T,\s{t}{0}{})$, the  distance between any
two states $t_i, t_j \in X_T$, denoted by $d(t_i,t_j)$, is the weight of the edge $(t_i,t_j)$ 
in the fault graph $G(T, \mathcal{M})$.  The least distance in $G(T, \mathcal{M})$ is denoted by $\w(T, \mathcal{M})$. 
\end{definition}

Given a fault graph, $G(T,\mathcal{M})$, the smallest distance between the nodes in the fault graph
specifies the fault tolerance of $\mathcal{M}$. Consider the graph,
$G(R,\{A , B,C, F_1, F_2\})$, shown in  Fig. \ref{figFaultGraph} $(v)$. Since the smallest distance in
the graph is three, we can remove any two machines from  $\{A , B,C, F_1, F_2\}$ and still regenerate
the current state of $R$. As seen before, given the state of $R$, we can determine the state of any
machine less than $R$.  Therefore, the set of machines $\{A , B, C, F_1, F_2\}$ can correct two crash
faults. 
\begin{figure*}[!htb]
\centerline{
\scalebox{0.4}{
 \includegraphics{./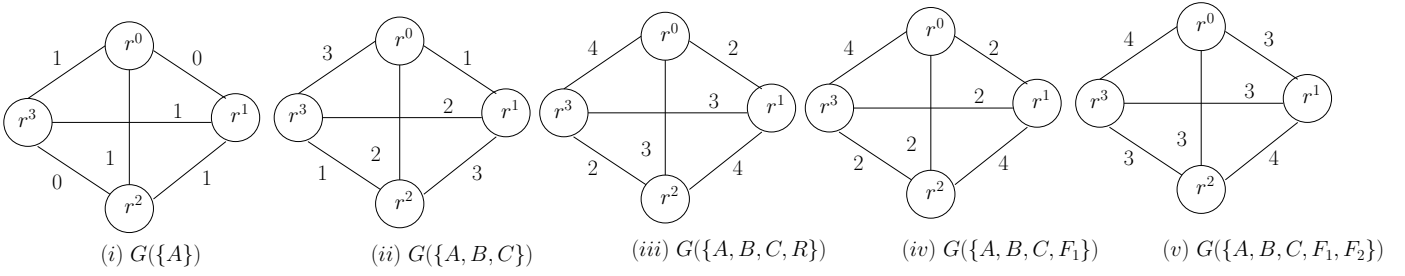}
} 
}
\caption{Fault Graphs, $G (R,\mathcal{M})$, for sets of machines shown in Fig.
\ref{figStEvRedLattice}. For notational convenience, we just label the graphs with $G(\mathcal{M})$. All eight nodes $r^0$-$r^7$ with their edges have not been shown due to
space constraints.}
\label{figFaultGraph}
\end{figure*}


\begin{theorem}\label{thFaultTolerance} A set of machines $\mathcal{M}$, can correct up to $f$
crash faults iff $\w(T,\mathcal{M}) > f$, where $T$ is the reachable cross-product of all machines
in $\mathcal{M}$.  \end{theorem}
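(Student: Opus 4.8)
The plan is to deduce the theorem from the edge-weight characterization already recorded before the statement, namely that the states of any $|\mathcal{M}|-x$ machines suffice to tell whether $T$ is in state $t^i$ or $t^j$ exactly when $d(t^i,t^j)>x$. First I would pin down the two structural facts coming from $T$ being the reachable cross product of $\mathcal{M}$: (i) every $M\in\mathcal{M}$ satisfies $M\leq T$, so its state set is a closed partition of $X_T$ and the state of $T$ determines the block $[t]_M$ of every $M$; and (ii) distinct states of $T$ induce distinct tuples of machine-states (this is precisely what ``reachable cross product'' buys), so in particular $\w(T,\mathcal{M})\geq 1$. I also need to fix the meaning of ``correct up to $f$ crash faults'': after at most $f$ machines lose their state, the deterministic trusted recovery agent, seeing the states of all surviving machines and knowing the common event sequence, must output the correct states of the crashed machines, on every reachable system configuration.

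For the ($\Leftarrow$) direction I would assume $\w(T,\mathcal{M})>f$ and let $k\leq f$ machines crash. For every pair of distinct $t^i,t^j\in X_T$ we have $d(t^i,t^j)\geq\w(T,\mathcal{M})>f\geq k$, so by the characterization with $x=k$ the $|\mathcal{M}|-k$ survivors tell whether $T$ is in $t^i$ or $t^j$; ranging over all pairs, the map from $X_T$ to the tuple of observed survivor-states is injective, so the agent reconstructs the state $t$ of $T$, and then by fact (i) outputs $[t]_M$ for each crashed $M$. Hence $\mathcal{M}$ corrects up to $f$ crash faults.

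For the ($\Rightarrow$) direction I would argue the contrapositive. If $\w(T,\mathcal{M})\leq f$, pick a minimum-weight edge $(t^i,t^j)$ and let $S\subseteq\mathcal{M}$ be the set of machines that separate $t^i$ and $t^j$, so $1\leq|S|=d(t^i,t^j)\leq f$ (nonempty by fact (ii)). Since $t^i,t^j$ are reachable in $T$, there are reachable system configurations $C_i,C_j$ with $T$ in $t^i$, resp.\ $t^j$, and each $M$ in $[t^i]_M$, resp.\ $[t^j]_M$; closedness of the partitions makes these genuine executions on a common event sequence. Crash exactly the machines in $S$: every survivor fails to separate $t^i$ and $t^j$, hence has the same state in $C_i$ and $C_j$, so the agent's input is identical in the two cases, yet every machine in $S$ differs between $C_i$ and $C_j$, so no single output is correct for both. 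Thus $\mathcal{M}$ fails on some pattern of $|S|\leq f$ crashes, so it does not correct up to $f$.

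The two directions are routine once this scaffolding is set up; the part to be careful about is the modelling rather than the algebra. Specifically, I must make sure the two witness configurations in the converse are honest reachable states of the whole distributed system under one event sequence (this is where $M\leq T$ and closedness of the partitions do the real work), and I must invoke the monotonicity observation that an uncorrectable pattern of fewer than $f$ crashes already defeats any protocol claiming to tolerate $f$, so that ``up to $f$'' is handled. I would also check the boundary case $f=0$, where both sides hold trivially given $\w(T,\mathcal{M})\geq 1$.
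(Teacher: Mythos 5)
Your proof is correct and follows essentially the same route as the paper's: the forward direction reconstructs the state of $T$ from any $|\mathcal{M}|-f$ survivors because every edge has weight exceeding $f$, and the converse crashes exactly the (at most $f$) machines separating a minimum-weight pair, leaving two indistinguishable configurations. The extra scaffolding you supply (reachability of the two witness configurations, $\w\geq 1$, and the ``up to $f$'' monotonicity) is a welcome tightening but does not change the argument.
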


\begin{proof} $(\Rightarrow)$  Given that $\w(T,\mathcal{M}) > f$, we show that any $\mathcal{M} - f$
machines from $\mathcal{M}$ can accurately determine the current state of $T$, thereby recovering
the state of the crashed machines. Since $\w(T,
\mathcal{M}) > f$, by definition, at least $f+1$ machines separate any two states of $X_T$. Hence,
for any pair of states $(t_i, t_j) \in X_T$, even after $f$ crash failures in $\mathcal{M}$, 
at least one machine remains that can distinguish between $t_i$ and $t_j$. This implies
that it is possible to accurately determine the current state of $T$ by using any $\mathcal{M} - f$
machines from $\mathcal{M}$.

$(\Leftarrow)$  Given that $\w(T,\mathcal{M})
\leq f$, we show that the system cannot correct $f$ crash faults. The condition $\w(T,\mathcal{M}) \leq f$ implies that there exists states $t_i$ and $t_j$ in
$G(T,\mathcal{M})$ separated by distance $k$, where $k \leq f$.  Hence there exist exactly $k$
machines  in $\mathcal{M}$ that can distinguish
between states $t_i,t_j \in X_T$. Assume that all these $k$ machines crash (since $k \leq f$) when
$T$ is in either $t_i$ or $t_j$. Using the states of the remaining machines in $\mathcal{M}$, it is
not possible to determine whether $T$ was in state $t_i$ or $t_j$. Therefore, it is not possible to
exactly regenerate the state of any machine in $\mathcal{M}$ using the remaining machines.

\end{proof}

Byzantine faults may include machines which lie about their state. Consider
the machines $\{A,B,C,F_1,F_2\}$ shown in Fig. \ref{figStEvRedLattice}. From Fig.
\ref{figFaultGraph} $(v)$,  Let the execution states of the machines $A$, $B$,
$C$, $F_1$ and $F_2$ be $$a^0=\{r^0,r^1,r^5,r^6\}, b^1=\{r^1,r^3,r^4,r^5\},
c^0=\{r^0,r^1,r^4,r^7\}$$ $$f_1^0=\{r^0,r^2,r^4,r^5\}, f_2^0=\{r^0,r^3\},$$ respectively. Since
$r^0$ appears four times (greater than majority) among these states, even if there is one liar we can determine that $R$ is
in state $r^0$. But if $R$ is in state $r^0$, then $B$ must have been in state $b^0$ which contains
$r^0$. So clearly, $B$ is lying and its correct state is $b^1$. Here, we can determine the correct state 
of the liar, since $d_{min}(R,
\{A,B,C, F_1,F_2\})=3$, and the
majority of machines distinguish between all pairs of states. 

\begin{theorem}\label{thByzFaultTolerance} A set of machines $\mathcal{M}$, can correct up to $f$
Byzantine faults iff $\w(T,\mathcal{M}) > 2f$, where $T$ is the reachable cross-product of all
machines in $\mathcal{M}$.  \end{theorem}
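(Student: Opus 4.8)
The plan is to mirror the structure of the proof of Theorem~\ref{thFaultTolerance}, upgrading the combinatorial bookkeeping from ``erasures'' to ``errors''. Recall the coding-theory analogy already invoked in the paper: a code of minimum distance $> 2f$ corrects $f$ errors. Here an ``error'' is a Byzantine machine reporting a state of $T$ that is consistent with its block structure but differs from the true state. So I would phrase the whole argument in terms of the fault graph $G(T,\mathcal{M})$ and the distance $d$ between states of $X_T$.

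For the forward direction ($\Leftarrow$), assume $\w(T,\mathcal{M}) > 2f$. Suppose the true state of $T$ is $t_i$, and at most $f$ machines in $\mathcal{M}$ lie. Each truthful machine reports its own block, which contains $t_i$; each lying machine reports some block it owns, not containing $t_i$. The recovery agent, given the reported blocks, looks for a state $t \in X_T$ consistent with the maximum number of reports (i.e.\ lying in the reported block of as many machines as possible). I claim $t_i$ is the unique such maximizer. It is consistent with at least $|\mathcal{M}| - f$ reports. Any other candidate $t_j \neq t_i$: the machines that separate $t_i$ from $t_j$ (there are $d(t_i,t_j) > 2f$ of them) cannot be consistent with both; among those, at least $d(t_i,t_j) - f > f$ are truthful, hence report a block containing $t_i$ but not $t_j$, so they are inconsistent with $t_j$. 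Thus $t_j$ is consistent with at most $|\mathcal{M}| - (f+1) < |\mathcal{M}| - f$ reports, strictly fewer than $t_i$. So $t_i$ is recovered uniquely, and from $t_i$ the state of every machine (each being $\leq T$) is determined; in particular the liars' correct states.

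For the converse ($\Rightarrow$), assume $\w(T,\mathcal{M}) \leq 2f$, so there are states $t_i, t_j$ with $d(t_i,t_j) = k \leq 2f$, and exactly $k$ machines separate them. Split those $k$ machines into two groups of sizes $\lceil k/2 \rceil \le f$ and $\lfloor k/2 \rfloor \le f$. In scenario one $T$ is in $t_i$ and the first group lies, reporting the $t_j$-side block; in scenario two $T$ is in $t_j$ and the second group lies, reporting the $t_i$-side block. In both scenarios the multiset of reported blocks is identical (the non-separating machines report the same block either way, since they don't distinguish $t_i$ from $t_j$), and at most $f$ machines lie. The recovery agent cannot distinguish the two scenarios, so it cannot correctly recover in at least one of them. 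Hence $f$ Byzantine faults are not correctable.

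The main obstacle, and the point I would be most careful about, is the converse: I need the ``reported blocks coincide'' step to be airtight, which requires that a machine not separating $t_i$ and $t_j$ has them in a \emph{common} block (immediate from the partition definition) and that the liars in each scenario can indeed choose to report exactly the opposite side's block (they can, since each separating machine owns a block containing $t_i$ and a different one containing $t_j$). A secondary subtlety in the forward direction is making sure the decoding rule is ``maximize consistency'' rather than ``majority vote on the raw state of $T$'', since machines report blocks, not full states; once phrased via the fault graph this is clean. I would also remark that the same argument yields the detection claim (detect $f$ Byzantine faults iff $\w(T,\mathcal{M}) > f$), paralleling the erasure/error dichotomy, though that is stated separately in the paper.
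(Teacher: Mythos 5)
Your proof is correct and follows essentially the same route as the paper's: the forward direction is the paper's ``majority vote'' argument (you merely make the decoding rule precise as maximizing consistency over reported blocks), and the converse is the paper's weak-edge argument, which you tighten by exhibiting the two explicitly indistinguishable scenarios obtained by splitting the $k\leq 2f$ separating machines into two groups of size at most $f$. The extra care in both directions is a strict improvement in rigor but not a different approach.
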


\begin{proof} $(\Rightarrow)$  Given that $\w(T,\mathcal{M}) > 2f$, we show that any $\mathcal{M} - f$
correct machines from $\mathcal{M}$ can accurately determine the current state of $T$ in spite of
$f$ liars. Since $\w(T, \mathcal{M}) > 2f$, at least $2f+1$ 
machines separate any two states of $X_T$. Hence, for any pair of states $t_i, t_j \in X_T$, 
after $f$ Byzantine failures in $\mathcal{M}$, there will always be at least $f+1$ correct
machines that can distinguish between $t_i$ and $t_j$. This implies that it is possible to
accurately determine the current state of $T$ by simply taking a majority vote.

$(\Leftarrow)$  Given that $\w(T,\mathcal{M})
\leq 2f$, we show that the system cannot correct $f$ Byzantine faults. $\w(T,\mathcal{M}) \leq 2f$ implies that there exists states $t_i, t_j \in X_T$ 
separated by distance $k$, where $k \leq 2f$.  If $f$ among these $k$ machines 
lie about their state, we have only  $k-f$ correct machines remaining. Since, $k-f \leq f$, it is
impossible to distinguish the liars from the truthful machines and regenerate the correct state of
$T$. 
\end{proof}
 
In this paper, we are concerned only with the fault graph of machines w.r.t the $\RCP$ of the
primaries $\m{P}$. For notational convenience,
we use $G(\mathcal{M})$ instead of $G(\RCP, \mathcal{M})$ and $\w(\mathcal{M})$ instead of $\w(\RCP,
\mathcal{M})$. From theorems \ref{thFaultTolerance} and \ref{thByzFaultTolerance}, it is clear that
a set of $n$ machines $\mathcal{P}$, can
correct $(\w(\mathcal{P})-1)$ crash faults and $\lfloor (\w(\mathcal{P})-1) /2 \rfloor$ Byzantine faults. 
Henceforth, we only consider backup machines less than or equal to the $\RCP$ of the primaries.
In the following section, we describe the theory of such backup machines.
 
\subsection{Theory of \fusion{f}{m}} \label{secFusion}

To correct faults in a given set of machines, we need to add backup machines so that the fault
tolerance of the system (original set of machines along with the backups) increases to the desired
value. To simplify the discussion, in the remainder of this paper, unless specified otherwise, we
mean crash faults when we simply say faults. Given a set of $n$ machines $\mathcal{P}$, we add $m$ backup machines $\mathcal{F}$, each less than
or equal to the $\RCP$, such that the set of machines in $\mathcal{P \cup F}$ can correct $f$ faults.
We call the set of $m$ machines in $\mathcal{F}$, an \fusion{f}{m} of $\mathcal{P}$. From theorem
\ref{thFaultTolerance}, we know that, $\w(\mathcal{P \cup F}) > f$.

\begin{definition}(Fusion) \label{defFusion}
Given a set of $n$ machines $\mathcal{P}$, we refer to the set of $m$ machines $\mathcal{F}$, as an \emph{\fusion{f}{m}} of $\mathcal{P}$,  if $\w(\mathcal{P \cup F}) > f$.
\end{definition}

Any machine belonging to $\m{F}$ is referred to as a \emph{fused backup}  or just a \emph{fusion}. 
Consider the set of machines, $\mathcal{P} = \{A,B,C\}$, shown in
Fig. \ref{figMainExample}. From Fig. \ref {figFaultGraph} $(ii)$, $\w(\{A,B,C\}) = 1$. Hence the set
of machines $\m{P}$, cannot correct a single fault. To generate a set of machines $\mathcal{F}$, such that, $\mathcal{P \cup
F}$ can correct two faults, consider Fig. \ref{figFaultGraph} $(v)$. Since 
$\w(\{A,B,C,F_1,F_2\}) = 3$, $\{A,B,C, F_1,F_2\}$ can correct two 
faults. Hence, $\{F_1,F_2\}$ is a \fusion{2}{2} of $\{A,B,C\}$. Note that the set
of machines in $\{A,A,B,B,C,C\}$, i.e., replication, is a \fusion{2}{6} of $\{A,B,C\}$.  

%
%
%
Any machine in the set $\{A,B,C, F_1,F_2\}$ can at
most contribute a value of one to the weight of any edge in the graph $G(\{A,B,C,F_1,F_2\})$. Hence, even if we
remove one of the machines, say $F_2$, from this set, $\w(\{A,B,C,F_1\})$ is greater than
one. So $\{F_1\}$ is an \fusion{1}{1} of $\{A,B,C\}$. 

\begin{theorem} (Subset of a Fusion) \label{thSubsetOfAFusion} Given a set of $n$ machines
$\mathcal{P}$, and an \fusion{f}{m} $\mathcal{F}$, corresponding to it, any subset $\mathcal{F}'
\subseteq \mathcal{F}$ such that $|\mathcal{F}'| = m-t$  is a  \fusion{f-t}{m-t} when $t \leq
min(f,m)$.  \end{theorem}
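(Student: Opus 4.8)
The plan is to argue entirely in terms of the minimum fault-graph distance $\w$, using Definition~\ref{defFusion}. Unpacking that definition, the hypothesis ``$\mathcal{F}$ is an \fusion{f}{m} of $\mathcal{P}$'' says exactly $\w(\mathcal{P}\cup\mathcal{F}) > f$, and the conclusion ``$\mathcal{F}'$ is an \fusion{f-t}{m-t} of $\mathcal{P}$'' says exactly $\w(\mathcal{P}\cup\mathcal{F}') > f-t$, where $\mathcal{F}' \subseteq \mathcal{F}$ with $|\mathcal{F}'| = m-t$. The restriction $t \le \min(f,m)$ is there only so that $m-t \ge 0$ (a subset of that size exists) and $f-t \ge 0$ (the target fault count is meaningful); I would note this at the outset and otherwise not dwell on it.

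The one real ingredient is a ``Lipschitz'' property of the fault graph under deletion of a single machine. Fix two distinct states $t^i,t^j$ of the $\RCP$ of $\mathcal{P}$; this is also the $\RCP$ of $\mathcal{P}\cup\mathcal{F}$ and of $\mathcal{P}\cup\mathcal{F}'$, since every machine of $\mathcal{F}$ is $\le \RCP$, so all three fault graphs share the same vertex set and their edges can be compared directly. By Definition~\ref{defFaultGraph}, the weight of the edge $(t^i,t^j)$ in $G(\mathcal{M})$ counts the machines of $\mathcal{M}$ separating $t^i$ from $t^j$, and each machine contributes $0$ or $1$. Consequently, deleting one machine lowers the weight of every edge by at most $1$, hence lowers $\w(\mathcal{M})$ by at most $1$: $\w(\mathcal{M}\setminus\{M\}) \ge \w(\mathcal{M}) - 1$.

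Applying this along the chain obtained by deleting the $t$ machines of $\mathcal{F}\setminus\mathcal{F}'$ one at a time gives $\w(\mathcal{P}\cup\mathcal{F}') \ge \w(\mathcal{P}\cup\mathcal{F}) - t > f - t$, which is the claim. Equivalently, and perhaps cleaner to write, I could skip the chain and argue edge-by-edge: for any pair $(t^i,t^j)$ at most $t$ of the deleted machines separated them, so the edge weight in $G(\mathcal{P}\cup\mathcal{F}')$ is at least its weight in $G(\mathcal{P}\cup\mathcal{F})$ minus $t$, hence at least $f+1-t$; taking the minimum over edges gives $\w(\mathcal{P}\cup\mathcal{F}') \ge f+1-t > f-t$.

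I do not expect a genuine obstacle: the result is essentially the additivity of edge weights in the count of separating machines. The only spots that warrant a sentence of care are (i) checking that passing to a sub/superset of machines that all lie below the $\RCP$ leaves the $\RCP$ — and hence the vertex set of the fault graph — unchanged, so that weights are compared on the same graph, and (ii) the degenerate boundary cases permitted by $t \le \min(f,m)$, e.g.\ $t=f$ yielding a $(0,m-f)$-fusion (trivially any machine set corrects $0$ faults) or, when $m \le f$, $t=m$ forcing $\mathcal{F}'=\emptyset$ so that the statement reduces to $\w(\mathcal{P}) > f-m$, which again follows from the same deletion bound.
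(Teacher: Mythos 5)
Your proof is correct and follows essentially the same route as the paper: each machine contributes at most one to any edge weight of the fault graph, so deleting the $t$ machines of $\mathcal{F}\setminus\mathcal{F}'$ lowers $\w$ by at most $t$, giving $\w(\mathcal{P}\cup\mathcal{F}') > f-t$. Your added remarks on the unchanged vertex set and the boundary cases of $t \leq \min(f,m)$ are care the paper omits but do not change the argument.
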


\begin{proof}
Since, $\mathcal{F}$ is an \fusion{f}{m} of $\mathcal{P}$, $\w(\mathcal{P \cup F}) > f$.  Any machine, $F \in \mathcal{F}$, can at most
contribute a value of one to the weight of any edge of the graph, $G(\mathcal{P \cup F})$.
 Therefore, even if we remove $t$ machines from
the set of machines in $\mathcal{F}$, $\w(\mathcal{P \cup F}) > f-t$.  Hence, for any subset
$\mathcal{F}' \subseteq \mathcal{F}$, of size $m-t$, $\w(\mathcal{P \cup F'}) > f-t$. This implies
that $\mathcal{F'}$ is an \fusion{f-t}{m-t} of $\mathcal{P}$. 
\end{proof}

It is important to note that the converse of this theorem is not true. In Fig.
\ref{figStEvRedLattice}, while $\{M_2\}$ and
$\{F_1\}$ are \fusion{1}{1}s of $\{A,B,C\}$, since $\w(\{A,B,C,M_2,F_1\}) = 2$, $\{M_2, F_1\}$ is not a
\fusion{2}{2} of $\{A,B,C\}$. We now consider the existence of an \fusion{f}{m} for a given set of machines $\mathcal{P}$.
Consider the existence of a \fusion{2}{1} for $\{A,B,C\}$ in Fig. \ref{figStEvRedLattice}. From Fig. \ref{figFaultGraph} $(ii)$,
$\w(\{A, B,C\}) = 1$. Clearly, $R$ covers each pair of edges in the fault graph. Even if we add $R$ to this set, from Fig. \ref{figFaultGraph}
$(iii)$, $\w(\{A, B,C, R\}) < 3$. Hence, there cannot exist a \fusion{2}{1} for $\{A,B,C\}$.

\begin{theorem}\label{thExistenceFmFusion}(Existence of Fusions)  Given a set of $n$
machines $\mathcal{P}$, there exists an \fusion{f}{m} of $\mathcal{P}$ iff $m + \w(\mathcal{P}) >
f$.  \end{theorem}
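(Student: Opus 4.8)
The plan is to build everything on one elementary property of fault graphs: by Definition~\ref{defFaultGraph}, the weight of an edge $(t^i,t^j)$ in $G(\RCP,\mathcal{M})$ is simply the number of machines of $\mathcal{M}$ that separate $t^i$ and $t^j$. Consequently, enlarging $\mathcal{M}$ can only increase edge weights, and adjoining a single machine raises the weight of any given edge by at most one. I would also note at the outset that every backup in a fusion is, by definition, $\leq \RCP$, so the reachable cross product of $\mathcal{P}\cup\mathcal{F}$ is still the $\RCP$ of $\mathcal{P}$; hence the notational convention $G(\mathcal{M})=G(\RCP,\mathcal{M})$, $\w(\mathcal{M})=\w(\RCP,\mathcal{M})$ applies uniformly to $\mathcal{P}$, $\mathcal{F}$ and $\mathcal{P}\cup\mathcal{F}$.

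For the ``only if'' direction I would argue as follows. Suppose $\mathcal{F}$ is an \fusion{f}{m} of $\mathcal{P}$, i.e.\ $|\mathcal{F}|=m$ and $\w(\mathcal{P}\cup\mathcal{F})>f$. Choose a pair of states $(t^i,t^j)$ whose weight in $G(\mathcal{P})$ equals $\w(\mathcal{P})$. Adjoining the $m$ machines of $\mathcal{F}$ raises the weight of this particular edge by at most $m$, so its weight in $G(\mathcal{P}\cup\mathcal{F})$ is at most $\w(\mathcal{P})+m$. Since $\w(\mathcal{P}\cup\mathcal{F})$ is the minimum weight over \emph{all} edges, $\w(\mathcal{P}\cup\mathcal{F})\le \w(\mathcal{P})+m$, and combining with $\w(\mathcal{P}\cup\mathcal{F})>f$ yields $m+\w(\mathcal{P})>f$.

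For the ``if'' direction, assume $m+\w(\mathcal{P})>f$ and take $\mathcal{F}$ to be $m$ copies of the $\RCP$ itself (each trivially $\leq \RCP$, and the empty set when $m=0$). As a closed partition of its own state set, the $\RCP$ is the partition into singletons, so it puts every pair of distinct states into distinct blocks, i.e.\ it covers every edge of the fault graph. Hence each of the $m$ copies adds exactly one to the weight of every edge, so every edge weight in $G(\mathcal{P}\cup\mathcal{F})$ equals its weight in $G(\mathcal{P})$ plus $m$; in particular $\w(\mathcal{P}\cup\mathcal{F})=\w(\mathcal{P})+m>f$, and by Definition~\ref{defFusion} $\mathcal{F}$ is an \fusion{f}{m} of $\mathcal{P}$.

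Since the argument is so short, there is no genuinely hard step; the only points that need care are the bookkeeping already flagged — that adjoining machines $\leq \RCP$ leaves the reference $\RCP$ (and hence the vertex set of every fault graph in play) unchanged, that $\RCP\leq\RCP$ and the $\RCP$ is the finest closed partition so it covers all edges, and that each machine shifts a given edge weight by $0$ or $1$. It is also worth remarking that when $\w(\mathcal{P})>f$ already holds the primaries alone are fault tolerant, but padding with $m$ copies of the $\RCP$ still produces a valid \fusion{f}{m}, so the characterization needs no special-casing.
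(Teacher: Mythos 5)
Your proof is correct and follows essentially the same route as the paper's: for the forward direction, the observation that the $m$ added machines can raise any edge weight (in particular that of a minimum-weight edge of $G(\mathcal{P})$) by at most $m$; for the converse, taking $\mathcal{F}$ to be $m$ copies of the $\RCP$, each of which covers every edge. Your version merely spells out the bookkeeping (tracking a specific minimum-weight edge, and noting that the $\RCP$ separates all pairs of its own states) that the paper leaves implicit.
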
 
\begin{proof} 
$(\Rightarrow)$
Assume that there exists an \fusion{f}{m} $\mathcal{F}$ for the given set of machines $\mathcal{P}$. 
Since, $\mathcal{F}$ is an \fusion{f}{m} fusion of $\mathcal{P}$, $\w(\mathcal{P \cup F}) > f$. The
$m$ machines in $\mathcal{F}$, can at most contribute a value of $m$ to the weight of each edge in
$G(\mathcal{P \cup F})$. Hence, $m + \w(\mathcal{P})$ has to be greater than $f$.

$(\Leftarrow)$
Assume that $m + \w(\mathcal{P}) >  f$. Consider a set of $m$ machines $\mathcal{F}$, containing $m$
copies of the $\RCP$. These copies contribute exactly $m$ to the weight of each edge in $G(\mathcal{P \cup F})$. Since,
$\w(\mathcal{P}) >  f -m$, $\w(\mathcal{P \cup F}) > f$. Hence, $\mathcal{F}$ is an \fusion{f}{m}
of $\mathcal{P}$. 

\end{proof}

Given a set of machines, we now define an order among \fusion{f}{m}s corresponding to them.

\begin{definition}(Order among \fusion{f}{m}s)
Given a set of $n$ machines $\mathcal{P}$, an \fusion{f}{m} $\mathcal{F} =\{F_1,..F_m\}$, is less
than another \fusion{f}{m} $\mathcal{G}$, i.e, $\mathcal{F} < \mathcal{G}$, iff the machines in
$\mathcal{G}$ can be ordered as $\{G_1,G_2,..G_m\}$ such that $\forall 1\leq i \leq m: (F_i \leq
G_i) \wedge (\exists j: F_j < G_j)$.
\end{definition}

An \fusion{f}{m} $\mathcal{F}$ is \emph{minimal}, if there exists no \fusion{f}{m} $\mathcal{F}'$,
such that, $\mathcal{F}' < \mathcal{F}$. It can be seen that, $$\w(\{A,B,C, M_2,F_2\}) =3,$$
and hence, $\mathcal{F}' = \{M_2,F_2\}$ is a \fusion{2}{2} of $\{A,B,C\}$. We have seen that
$\mathcal{F} = \{F_1, F_2\}$, is a \fusion{2}{2} of $\{A,B,C\}$. From Fig. \ref{figStEvRedLattice},
since $F_1 < M_2$, $\mathcal{F} < \mathcal{F}'$. In Fig. \ref{figStEvRedLattice}, since $R_\bot$
cannot be a fusion for $\{A,B,C\}$, there exists no \fusion{2}{2} less than $\{F_1,F_2\}$. 
Hence, $\{F_1,F_2\}$ is a minimal \fusion{2}{2} of
$\{A,B,C\}$. 

\col{We now prove a property of the fusion machines that is crucial for practical applications. Consider
a set of primaries $\m{P}$ and an \fusion{f}{m} $\m{F}$ corresponding to it. The client sends
updates addressed to the primaries to all the backups as well. We show that events or
inputs that belong to distinct set of primaries, can be received in any order at each of the fused
backups. This eliminates the need for synchrony at the backups.}

Consider a fusion $F \in \m{F}$. Since the states of $F$ are essentially partitions of the state set
of the $\RCP$, the state transitions of $F$ are defined by the state transitions of the $\RCP$. For
example, machine $M_1$ in Fig. \ref{figStEvRedLattice} transitions from $\{r^0,r^2\}$ to
$\{r^1,r^3\}$ on event 1, because $r^0$ and $r^2$ transition to $r^1$ and $r^3$ respectively on
event 1. Hence, if we show that the state of the $\RCP$ is independent of the order in which it
receives events addressed to different primaries, then the same applies to the fusions. 

\begin{theorem}(Commutativity)
The state of a fused backup after acting on a sequence of events, is independent of the order in which the
events are received, as long as the events belong to distinct sets of primaries. 
\end{theorem}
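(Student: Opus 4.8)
The plan is to reduce the statement about an arbitrary fused backup to the corresponding statement about the $\RCP$, and then to verify commutativity for the $\RCP$ by a componentwise argument. As observed just before the theorem, every fusion $F \in \m{F}$ is a closed partition of the state set of the $\RCP$, so the state of $F$ is a function of the state of the $\RCP$: knowing which state $r$ the $\RCP$ is in determines the unique block of $F$ that contains $r$, and that block is the state of $F$; moreover the transition of $F$ on an event $\sigma$ is the one induced by the transition of the $\RCP$ on $\sigma$. Hence, if the state of the $\RCP$ after acting on a sequence of events is independent of the order of those events, the same conclusion follows for $F$. So it suffices to prove the theorem with ``fused backup'' replaced by the $\RCP$.

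Second, I would establish the base case: two events $\sigma_1$ and $\sigma_2$ that belong to distinct (i.e., disjoint) sets of primaries commute on the $\RCP$. A state of the $\RCP$ is a tuple $(x_1,\dots,x_n)$, one coordinate per primary, and by the definition of the $\RCP$ the transition function acts coordinatewise: on an event $\sigma$, coordinate $i$ is updated to $\alpha_{P_i}(x_i,\sigma)$ if $\sigma \in \Sigma_{P_i}$ and is left unchanged (a self-loop) otherwise. Let $S_1$ and $S_2$ be the sets of primaries whose event sets contain $\sigma_1$ and $\sigma_2$ respectively; by hypothesis $S_1 \cap S_2 = \phi$. For a coordinate $i \in S_1$, applying $\sigma_2$ first does nothing to $x_i$ and then $\sigma_1$ sends it to $\alpha_{P_i}(x_i,\sigma_1)$, whereas applying $\sigma_1$ first yields $\alpha_{P_i}(x_i,\sigma_1)$ and then $\sigma_2$ does nothing; the two results agree. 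The symmetric argument handles $i \in S_2$, and any coordinate outside $S_1 \cup S_2$ is untouched by both events. Therefore the $\RCP$ reaches the same state under $\sigma_1\sigma_2$ and under $\sigma_2\sigma_1$.

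Third, I would lift the base case to arbitrary reorderings. Given a sequence of events that pairwise belong to distinct sets of primaries, any permutation of it can be obtained from the original by a finite sequence of adjacent transpositions (e.g., by a bubble-sort argument). Each such transposition swaps two events whose primary-sets are disjoint and hence, by the base case applied at that point of the run, leaves the resulting $\RCP$ state unchanged; composing these swaps shows that every ordering produces the same $\RCP$ state. Combining this with the reduction of the first step completes the proof.

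I expect the only delicate point to be pinning down the convention that a primary machine self-loops on events outside its own event set --- this is precisely what makes the coordinatewise transition of the $\RCP$ well defined for every $\sigma \in \Sigma$ --- together with stating carefully what ``distinct sets of primaries'' means (the primary-sets of the two events are disjoint, so the events act on disjoint coordinates). Once that is fixed, the remainder is a routine componentwise computation plus an induction on the number of adjacent transpositions.
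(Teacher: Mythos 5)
Your proposal is correct and follows essentially the same route as the paper's proof: reduce the claim to the $\RCP$ via the closed-partition observation, show that two events with disjoint primary sets commute because the $\RCP$'s transition acts coordinatewise and each event touches only its own primaries' coordinates, and then extend to arbitrary reorderings. You simply make explicit the details the paper leaves implicit (the self-loop convention and the adjacent-transposition induction), and you also correctly state the disjointness condition as $\m{S}_a \cap \m{S}_b = \phi$, where the paper's text has a typographical slip writing $\cup$ instead of $\cap$.
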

\begin{proof}
We first prove the theorem for the $\RCP$, which is also a valid fused backup. Let the set of primaries be $\m{P}=\{P_1\ldots P_n\}$. Consider an event $e_i$ that belongs to
the set of primaries $\m{S}_i \subseteq \m{P}$. If the $\RCP$ is in state $r$, its next state transition on
event $e_i$ depends only on the transition functions of the primaries in $\m{S}_i$. Hence, the state
of the $\RCP$ after acting on two events $e_a$ and $e_b$ is independent of the order in which these
events are received by the $\RCP$, as long as $\m{S}_a \cup \m{S}_b = \phi$. The proof of the
theorem follows directly from this. 
\end{proof}

So far, we have presented the framework to understand fault tolerance among machines. Given a set of
machines, we can determine if they are a valid set of backups by constructing the fault graph of
those machines. In the following section, we present a technique to generate such backups
automatically.


\section {Algorithm to Generate Fused Backup Machines}\label{secSpaceEventFusions} 

\begin{figure*}[htb] \begin{center}
\fbox{\begin{minipage}[b]  {0.48\linewidth}
{\small
\emph{genFusion}\\
\h{\bf Input}: Primaries $\mathcal{P}$, faults $f$, state-reduction parameter $\State$,\\ 
\h event-reduction parameter $\Event$;\\
\h{\bf Output}: \fusion{f}{f} of $\mathcal{P}$; \\
\h$\mathcal{F} \leftarrow \{\}$; \\
\h //Outer Loop\\
\h{\bf for} {$(i =1$ to $f)$ } \\
	\h\h Identify weakest edges in fault graph $G(\mathcal{P} \cup \mathcal{F})$; \\
	\h\h $\mathcal{M} \leftarrow \{RCP(\mathcal{P})\}$; \\
\h \h//State Reduction Loop\\
	 \h \h{\bf for} {($j=1$ to $\State$)} \\
		\h\h\h 	$\mathcal{S} \leftarrow \{\}$;\\
		\h\h\h {\bf for} {$(M \in \mathcal{M})$} \\
		   \h\h\h\h $\mathcal{S}= \mathcal{S} \cup \emph{reduceState}(M)$;\\
		\h\h\h $\mathcal{M}$ = All machines in $\mathcal{S}$ that increment $\w(\mathcal{P} \cup
\mathcal{F})$;\\ 
	\h \h//Event Reduction Loop\\
	 \h \h{\bf for} {($j=1$ to $\Event$)} \\
		\h\h\h 	$\mathcal{E} \leftarrow \{\}$;\\
		\h\h\h {\bf for} {$(M \in \mathcal{M})$} \\
		   \h\h\h\h $\mathcal{E}= \mathcal{E} \cup \emph{reduceEvent}(M)$;\\
		\h\h\h $\mathcal{M}$ = All machines in $\mathcal{E}$ that increment $\w(\mathcal{P} \cup
\mathcal{F})$;\\ 
	 \h \h//Minimality Loop \\
	 \h \h$M \leftarrow$ Any machine in $\mathcal{M}$;\\
	 \h \h {\bf while}{ (all states of $M$ have not been combined)} \\
		\h\h\h $\mathcal{C}\leftarrow \emph{reduceState}(M)$;\\
		\h\h\h $M$= Any machine in  $\mathcal{C}$ that increments $\w(\mathcal{P} \cup
\mathcal{F})$; \\
	\h\h $\mathcal{F} \leftarrow \{M\} \bigcup \mathcal{F}$;\\
\h {\bf return} $\mathcal{F}$;
}
\end{minipage}
} 
\fbox{
\begin{minipage}[b]  {0.45\linewidth}
{\small
\emph{reduceState}\\
    \h {\bf Input}: Machine $P$ with state set $X_P$, event set $\Sigma_P$\\
		\h and transition function $\alpha_P$;\\ 
    \h {\bf Output}: Largest Machines $< P$ with $\leq |X_P|-1$ states;\\
    \h $\mathcal{B} = \{\}$; \\
    \h {\bf for} $(s_i,s_j \in X_P)$\\ 
	\h\h //combine states $s_i$ and $s_j$\\
	\h\h Set of states, $X_B = X_P$ with $(s_i,s_j)$ combined;\\ 
	\h\h $\mathcal{B} = \mathcal{B} \cup \{$Largest machine consistent with $X_B\}$; \\
    \h {\bf return} largest incomparable machines in $\mathcal{B}$;\\\\
\emph{reduceEvent}\\
    \h {\bf Input}: Machine $P$ with state set $X_P$, event set $\Sigma_P$\\
		\h and transition function $\alpha_P$;\\ 
    \h {\bf Output}: Largest Machines $< P$ with $\leq |\Sigma_P|-1$ events;\\
    \h $\mathcal{B} = \{\}$; \\
    \h {\bf for} $(\sigma \in \Sigma_P)$\\ 
	\h\h Set of states, $X_B = X_P$;\\ 
	\h\h //combine states to self-loop on $\sigma$\\
	\h\h {\bf for} ($s \in X_B$)\\		
		\h\h\h $s = s \cup \alpha_P(s,\sigma)$;\\
	\h\h $\mathcal{B} = \mathcal{B} \cup \{$Largest machine consistent with $X_B\}$; \\
    \h {\bf return} largest incomparable machines in $\mathcal{B}$;\\\\\\\\\\
}\end{minipage}
}
\end{center}
\caption[ ]{Algorithm to generate an \fusion{f}{f} for a given set of primaries $\m{P}$. Note that,
we use the terms \emph{largest}, \emph{incomparable} w.r.t the order defined in section
\ref{secOrder}.}
\label{figFusionAlgo}
\end{figure*}

Given a set of $n$ primaries $\mathcal{P}$, we present an algorithm  to
generate an \fusion{f}{f} $\mathcal{F}$ of $\mathcal{P}$. The number of faults to be corrected, $f$,
is an input parameter based on the system's requirements. The algorithm also takes as input two parameters
$\State$ and $\Event$ and ensures (if possible) that each machine in
$\mathcal{F}$ has at most $(N-\State)$ states and at most $(|\Sigma|-\Event)$ events, where $N$ and
$\Sigma$ are the number of states and events in the $\RCP$. Further,
we show that $\mathcal{F}$ is a minimal fusion of $\mathcal{P}$. The algorithm has time complexity
polynomial in $N$. 

The \emph{genFusion} algorithm executes $f$ iterations and in each iteration adds a machine 
to $\m{F}$ that increases $\w(\m{P \cup F})$ (referred to as $\w$) by one. At
the end of $f$ iterations, $\w$ increases to $f+1$ and hence $\m{P \cup F}$ can
correct $f$ faults. The algorithm ensures that the backup selected in each iteration is optimized for states and
events. In the following paragraphs, we explain the \emph{genFusion} algorithm in detail, followed
by an example to illustrate its working. 

In each iteration of the \emph{genFusion} algorithm (Outer Loop), we first identify the set of weakest edges in
$\m{P \cup F}$ and then find a machine that covers these edges, thereby increasing $\w$
by one. We start with the $\RCP$, since it always increases $\w$. The `State
Reduction Loop' and the `Event Reduction Loop' successively reduce the states and events of the
$\RCP$. Finally the `Minimality Loop' searches as deep into the closed partition set of the
$\RCP$ as possible for a reduced state machine, without explicitly constructing the lattice. 

\emph{State Reduction Loop}: This loop uses the \emph{reduceState} algorithm in Fig. \ref{figFusionAlgo} to
iteratively generate machines with fewer states than the $\RCP$ that increase $\w$ by one. The \emph{reduceState}
algorithm,  takes as input, a machine $P$ and generates a set of machines in which at least two
states of $P$ are combined. For each pair of states $s_i,s_j$ in $X_P$, the \emph{reduceState} algorithm, first
creates a partition of blocks in which $(s_i,s_j)$ are combined and then constructs the largest
machine consistent with this partition. Note that, `largest' is based on the order
specified in section \ref{secOrder}. This procedure is repeated for all pairs in $X_P$ and the
largest incomparable machines among them are returned. At the end of $\State$ iterations of the state reduction loop, we generate a set of machines $\mathcal{M}$ each of which 
increases $\w$ by one and contains at most $(N-\State)$ states, if such  machines exist.  

\emph{Event Reduction Loop}: Starting with the state reduced machines in $\mathcal{M}$, the event
reduction loop uses the \emph{reduceEvent} algorithm in Fig. \ref{figFusionAlgo} to
generate reduced event machines that increase $\w$ by one. The \emph{reduceEvent}
algorithm,  takes as input, a machine $P$ and generates a set of machines that contain at least one
event less than $\Sigma_P$. To generate a machine less than any given
input machine $P$,
that does not contain an event $\sigma$ in its event set, the \emph{reduceEvent} algorithm
combines the states such that they loop
onto themselves on $\sigma$. The algorithm then constructs the largest machine that contains these states in the combined form. This
machine, in effect, ignores $\sigma$. This procedure is repeated for all events in $\Sigma_P$ and the
largest incomparable machines among them are returned. At the end of $\Event$ iterations of the
event reduction loop, we generate a set of machines $\mathcal{M}$ each of which increases $\w$ by
one and contains at most $(N-\State)$ states and at most $(|\Sigma|-\Event)$ events, if such
machines exist.  \footnote{In Appendix \ref{secAppEvReduction}, we present the concept of the
event-based decomposition of machines to replace a given machine $A$ with a set of machines that
contain fewer events than $\Sigma_A$.}

\emph{Minimality Loop}: This loop picks any machine $M$ among the state and event reduced machines
in $\m{M}$ and uses the \emph{reduceState} algorithm iteratively to generate a machine less than $M$ that
increases $\w$ by one until no further state reduction is possible i.e., all the states of $M$ have
been combined. Unlike the state reduction loop (which also uses the \emph{reduceState}
algorithm), in the minimality loop we never exhaustively explore all state reduced machines. After
each iteration of the minimality loop, we only pick \emph{one} machine that increases $\w$ by one. 

Note that, in all three of these inner loops, if in any iteration, no reduction is achieved,
then we simply exit the loop with the machines generated in the previous iteration. We use the example in Fig. \ref{figStEvRedLattice} with $\mathcal{P}=\{A,B,C\},
f=2, \State =1$ and $\Event=1$, to explain the \emph{genFusion} algorithm. Since $f=2$, there are two iterations of the outer loop and in each iteration we generate one machine.
Consider the first iteration of the outer loop. Initially, $\m{F}$ is empty and we need to add
a machine that covers the weakest edges in $G(\{A,B,C\})$. 

To identify the
weakest edges, we need to identify the mapping between the states of the $\RCP$
and the states of the primaries. For example, in Fig. \ref{figStEvRedLattice}, we need to map the states of the $\RCP$ to $A$. The starting states
are always mapped to each other and hence $r^0$ is mapped to $a^0$. Now $r^0$ on event $0$
transitions to $r^2$, while $a^0$ on event $0$ transitions to $a^1$. Hence, $r^2$ is mapped to
$a^1$. Continuing this procedure for all states and events, we obtain the mapping shown, i.e,
$a^0=\{r^0, r^1, r^5, r^6\}$ and $a^1=\{r^2, r^3, r^4, r^7\}$. Following this procedure for all
primaries, we can identify the weakest edges in $G(\{A,B,C\})$ (Fig. \ref{figFaultGraph} $(ii)$). In
Fig. \ref{figStEvRedLattice}, $M_1$, $M_2$ and $F_2$ are some of the largest incomparable machines
that contain at least one state less than the $\RCP$ (the entire set is too large to be enumerated
here). All three of these machines increase $\w$ and at the end of the one and only iteration of the state reduction
loop, $\mathcal{M}$ will contain at least these three machines.

The event reduction loop tries
to find machines with fewer events than the machines in $\m{M}$. For example, to generate a machine less than
$M_2$ that does not contain, say event 2, the \emph{reduceEvent} algorithm combines the blocks of
$M_2$ such
that they do not transition on event 2. Hence, $\{r^0, r^2\}$ in $M_2$ is combined with 
$\{r^4,r^5\}$  and $\{r^1,r^3\}$ is combined with $\{r^6,r^7\}$ to generate machine $F_1$
that does not act on event 2. The only machine less than $M_2$ that does not act on event 1 is
$R_\bot$. Since the \emph{reduceEvent} algorithm returns the largest incomparable machines, only $F_1$ is returned when
$M_2$ is the input. Similarly, with $M_1$ as input, the \emph{reduceEvent} algorithm returns
$\{C,F_1\}$ and with $F_2$ as input it returns $R_\bot$.  Among these machines only $F_1$ increases
$\w$. For example, $C$ does not cover the weakest edge $(r^0,r^1)$ of $G(\m{P})$. 
Hence, at the end of the one and only iteration of the event reduction loop,
$\mathcal{M}=\{F_1\}$.  

As there exists no machine less than
$F_1$, that increases $\w$, at the end of the minimality loop, $M=F_1$. Similarly, in the second
iteration of the outer loop $M=F_2$ and the \emph{genFusion}
algorithm returns $\{F_1,F_2\}$ as the fusion machines that increases $\w$ to three. Hence, using
the \emph{genFusion} algorithm, we have automatically generated the backups $F_1$ and $F_2$ 
shown in Fig. \ref{figMainExample}. Note that, in the worst case, there may exist no efficient
backups and the \emph{genFusion} algorithm will just return a set of $f$ copies of the $\RCP$.
However, our results in section \ref{secEvaluate} indicate that for many examples, efficient backups
do exist. 

\subsection{ Properties of the \emph{genFusion} Algorithm}

In this section, we prove properties of the \emph{genFusion} algorithm with
respect to:  
$(i)$ the number of fusion/backup machines $(ii)$ the number of states in each fusion machine, $(iii)$ the number of events in each fusion
machine and $(iv)$ the minimality of the set of fusion machines $\mathcal{F}$. We first introduce
concepts that are  relevant to the proof of these properties. 

\begin{lemma}\label{lemPrimDist}
Given a set of primary machines $\mathcal{P}$, $\w(\mathcal{P}) = 1$. 
\end{lemma}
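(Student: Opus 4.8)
The plan is to sandwich $\w(\mathcal{P})$ between $1$ and $1$, using the structure of the $\RCP$ for the lower bound and the standing model assumption together with Theorem~\ref{thFaultTolerance} for the upper bound.

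First I would establish $\w(\mathcal{P}) \geq 1$ directly. Write $\mathcal{P} = \{P_1,\ldots,P_n\}$ and let $T$ be its $\RCP$. By construction each state of $T$ is a tuple $(x_1,\ldots,x_n)$ with $x_k \in X_{P_k}$, and the block of the partition of $X_T$ induced by $P_k$ containing a state $t$ is exactly the set of tuples agreeing with $t$ in the $k$-th coordinate. Hence $P_k$ covers the edge $(t^i,t^j)$ iff $t^i$ and $t^j$ differ in coordinate $k$, so $d(t^i,t^j)$ equals the number of coordinates in which the two tuples disagree — a Hamming distance in the product $X_{P_1}\times\cdots\times X_{P_n}$. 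Since any two distinct states of $T$ disagree in at least one coordinate, every edge of $G(\mathcal{P})$ has weight at least one, and therefore $\w(\mathcal{P}) \geq 1$ (the model assumption below guarantees $T$ has at least two states, so $G(\mathcal{P})$ has at least one edge).

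Next I would obtain the matching upper bound from Theorem~\ref{thFaultTolerance}. By the assumption of Section~\ref{secFsmModel}, the given primaries cannot correct a single crash fault among themselves; the contrapositive of Theorem~\ref{thFaultTolerance} with $f = 1$ says this is precisely $\w(\mathcal{P}) \not> 1$, i.e. $\w(\mathcal{P}) \leq 1$. Combining the two bounds yields $\w(\mathcal{P}) = 1$.

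I do not expect a serious obstacle here; the one point worth flagging is the degenerate case in which $T$ has a single state, where $G(\mathcal{P})$ has no edges and $\w(\mathcal{P})$ is vacuous. In that situation the primaries trivially tolerate a crash (there is nothing to recover), which is exactly what the model excludes, so only the case of two or more $\RCP$ states arises and the argument above applies verbatim.
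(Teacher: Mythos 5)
Your proof is correct and follows essentially the same route as the paper's: the lower bound $\w(\mathcal{P})\geq 1$ comes from the fact that the primaries jointly determine the $\RCP$ state (you just make the coordinate/Hamming-distance picture explicit), and the upper bound $\w(\mathcal{P})\leq 1$ comes from the model assumption that the primaries cannot correct a single crash fault, read through Theorem~\ref{thFaultTolerance}.
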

\begin{proof}
Given the state of all the primary machines, the state of the $\RCP$ can be uniquely determined.
Hence, there is at least one machine among the primaries that distinguishes between each pair of
states in the $\RCP$ and so, $\w(\m{P}) \geq 1$. In section \ref{secFsmModel}, we assume that the set of machines
in $\m{P}$ cannot correct a single fault and this implies that, $\w(\m{P})\leq 1$. Hence, $\w(\m{P})= 1$. 
\end{proof}

\begin{lemma}\label{lemPrimEdges}
Given a set of primary machines $\mathcal{P}$, let $\mathcal{F'}$ be an \fusion{f}{f} of
$\mathcal{P}$. Each fusion machine $F \in \mathcal{F'}$ has to cover the weakest edges in
$G(\mathcal{P})$.
\end{lemma}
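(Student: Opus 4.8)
Let me analyze what needs to be shown. We have primaries $\mathcal{P}$ with $\w(\mathcal{P}) = 1$ by Lemma \ref{lemPrimDist}. The weakest edges in $G(\mathcal{P})$ are those with weight exactly $1$ (the minimum). An $(f,f)$-fusion $\mathcal{F}'$ satisfies $\w(\mathcal{P} \cup \mathcal{F}') > f$, i.e., $\w(\mathcal{P} \cup \mathcal{F}') \geq f+1$. We want: each $F \in \mathcal{F}'$ covers every weakest edge of $G(\mathcal{P})$.

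The plan is a counting argument on edge weights. Fix a weakest edge $(t^i, t^j)$ in $G(\mathcal{P})$; its weight is $1$, meaning exactly one primary separates $t^i$ and $t^j$. The weight of this same edge in the larger graph $G(\mathcal{P} \cup \mathcal{F}')$ equals $1$ plus the number of fusion machines in $\mathcal{F}'$ that separate $t^i$ and $t^j$. Since $|\mathcal{F}'| = f$ and each fusion contributes at most $1$ to this edge weight, the weight of $(t^i,t^j)$ in $G(\mathcal{P} \cup \mathcal{F}')$ is at most $1 + f$. But we need this weight to be at least $f+1$ for every edge (since $\w(\mathcal{P} \cup \mathcal{F}') \geq f+1$). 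Therefore the weight of $(t^i,t^j)$ in the larger graph is exactly $f+1$, which forces \emph{all} $f$ machines in $\mathcal{F}'$ to separate $t^i$ and $t^j$ — i.e., every $F \in \mathcal{F}'$ covers $(t^i,t^j)$. Since $(t^i,t^j)$ was an arbitrary weakest edge, each fusion machine covers all weakest edges of $G(\mathcal{P})$.

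The key steps, in order: (1) invoke Lemma \ref{lemPrimDist} to identify that weakest edges of $G(\mathcal{P})$ have weight exactly $1$; (2) express the weight of any edge in $G(\mathcal{P} \cup \mathcal{F}')$ as its weight in $G(\mathcal{P})$ plus the count of separating machines from $\mathcal{F}'$ (this uses Definition \ref{defFaultGraph} directly — edge weights add when taking the union of machine sets); (3) apply the upper bound $f$ on the contribution of $\mathcal{F}'$ together with the lower bound $f+1$ from the fusion property to pinch the weight to exactly $f+1$ on weakest edges; (4) conclude that every fusion machine must be among the separating ones.

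I do not expect any serious obstacle here; this is a short pigeonhole-style argument. The one point requiring a little care is step (2): making explicit that for a set of machines $\mathcal{M}_1 \cup \mathcal{M}_2$, the weight of an edge in $G(T, \mathcal{M}_1 \cup \mathcal{M}_2)$ is the sum of the weights in $G(T,\mathcal{M}_1)$ and $G(T,\mathcal{M}_2)$ — this is immediate from the definition (a machine separates a given pair or it does not, and membership in the union is the disjoint-ish count), but it is the hinge of the whole argument. Everything else is just combining the inequalities $1 + |\mathcal{F}'| \geq \text{(edge weight)} \geq f+1$ with $|\mathcal{F}'| = f$.
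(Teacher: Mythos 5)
Your proposal is correct and follows essentially the same argument as the paper: weakest edges of $G(\mathcal{P})$ have weight one by the preceding lemma, each of the $f$ machines in $\mathcal{F}'$ can add at most one to an edge's weight, and the requirement $\w(\mathcal{P}\cup\mathcal{F}')>f$ forces all $f$ of them to cover every weakest edge. Your write-up is just a more explicit version of the paper's pinching argument, with the additivity of edge weights over a union of machine sets spelled out.
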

\begin{proof}
From lemma \ref{lemPrimDist}, the weakest edges of $G(\mathcal{P})$ have weight equal to one. Since
$\mathcal{F'}$ is an \fusion{f}{f} of $\mathcal{P}$, $\w(\mathcal{P} \cup \mathcal{F'}) > f$. Also,
each machine in $\mathcal{F'}$ can increase the weight of any edge by at most one. Hence, all the
$f$ machines in $\mathcal{F'}$ have to cover the weakest edges in $G(\m{P})$. 
\end{proof}

%
Let the weakest edges of $G(\m{P \cup F})$ at the start of the $i^{th}$ iteration of the outer loop
of the \emph{genFusion} algorithm be denoted $E_i$. In the following lemma, we show that the set of weakest edges only increases with each
iteration. 

\begin{lemma}\label{lemMonotonic} In the \emph{genFusion} algorithm, for any two iterations $i$ and $j$, if $i<j$,  then $E_i \subseteq E_j$.  
\end{lemma}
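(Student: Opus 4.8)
The plan is to argue that the set of weakest edges is monotonically non-decreasing across iterations of the \emph{genFusion} algorithm, and for this it suffices to prove the one-step claim: if $E_i$ is the set of weakest edges at the start of iteration $i$ and $E_{i+1}$ at the start of iteration $i+1$, then $E_i \subseteq E_{i+1}$; the general statement $i<j \Rightarrow E_i \subseteq E_j$ then follows by transitivity of $\subseteq$ applied along the chain $E_i \subseteq E_{i+1} \subseteq \cdots \subseteq E_j$.

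First I would fix notation: let $w_i$ denote the minimum edge weight $\w(\m{P}\cup\m{F})$ at the start of iteration $i$, so $E_i$ is exactly the set of edges of weight $w_i$ in $G(\m{P}\cup\m{F})$ at that point. The key structural fact, which is stated already in the excerpt (and underlies Theorems~\ref{thSubsetOfAFusion} and~\ref{thExistenceFmFusion}), is that adding one machine $M$ to the current set increases the weight of any given edge by at most one, and by the design of the inner loops the machine $M$ added in iteration $i$ is chosen so that it \emph{does} increment $\w$, i.e.\ it covers every edge in $E_i$. Hence after iteration $i$: every edge in $E_i$ has its weight increased by exactly one (from $w_i$ to $w_i+1$), and every edge not in $E_i$ had weight $\geq w_i+1$ before and still has weight $\geq w_i+1$ after (it can only go up). Therefore the new minimum weight is $w_{i+1} = w_i + 1$, and the set of edges achieving this new minimum is $E_{i+1} = \{\,e : \text{weight}(e) = w_i+1\,\}$, which certainly contains all of $E_i$ (each of whose edges now has weight exactly $w_i+1$). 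This gives $E_i \subseteq E_{i+1}$.

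The step I expect to be the main obstacle is justifying cleanly that the machine added in iteration $i$ covers \emph{all} of $E_i$ rather than merely some edge of $E_i$ — i.e.\ that ``increments $\w$ by one'' forces coverage of every weakest edge, not just one. This needs the observation that if even a single edge of $E_i$ were left uncovered, that edge would retain weight $w_i$ after the addition, so the minimum would remain $w_i$ and $\w$ would not increase; hence any machine that increments $\w$ must cover every edge of $E_i$ simultaneously. I would also note the boundary behavior: if in some iteration no machine incrementing $\w$ exists (the worst case mentioned in the text, where the algorithm falls back to copies of the $\RCP$), the $\RCP$ still covers all edges and increments $\w$, so this situation does not actually arise within the outer loop; the lemma's hypotheses implicitly assume we are in a run where each outer iteration successfully adds an incrementing machine. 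With that, the inclusion $E_i \subseteq E_{i+1}$ holds at every step and the lemma follows.
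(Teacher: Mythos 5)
Your proposal is correct and follows essentially the same argument as the paper's proof: any added machine increases each edge weight by at most one, it must cover every edge of $E_i$ to increment $\w$, hence the minimum rises to exactly $w_i+1$ and all edges of $E_i$ attain it, giving $E_i \subseteq E_{i+1}$ and the general claim by transitivity. The point you flag as the main obstacle (that incrementing $\w$ forces coverage of \emph{all} weakest edges) is exactly the observation the paper relies on, and your justification of it is sound.
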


\begin{proof} Let the value of $\w$ for the $i^{th}$ iteration be $d$ and the edges with this weight
be $E_i$. Any machine
added to $\m{F}$ can at most increase the weight of each edge by one and it has to
increase the weight of all the edges in $E_i$ by one. So, 
$\w$ for the $(i+1)^{th}$ iteration is $d+1$ and the weight of the edges in $E_i$ will increase
to $d+1$. Hence, $E_i$ will be among the weakest edges in the $(i+1)^{th}$ iteration, or in other
words, $E_i \subseteq E_{i+1}$. This trivially
extends to the result: for any two iterations numbered $i$ and $j$ of the
\emph{genFusion} algorithm, if $i<j$,  then $E_i \subseteq E_j$.  
 \end{proof}

%
%
We now prove one of the main theorems of this paper. 

\begin{theorem}(Fusion Algorithm)\label{thEventState} Given a set of  $n$ machines $\mathcal{P}$, the
\emph{genFusion} algorithm
generates a set of machines $\mathcal{F}$ such that: 
\begin{enumerate}
\item (Correctness) $\mathcal{F}$ is an \fusion{f}{f} of
$\mathcal{P}$. 
\item (State \& Event Efficiency) If each
machine in $\mathcal{F}$ has greater than $(N-\State)$ states and $(|\Sigma|-\Event)$ events, then no \fusion{f}{f} of $\mathcal{P}$ contains a machine with less than or
equal to $(N-\State)$ states and $(|\Sigma|-\Event)$ events. 
\item (Minimality) $\mathcal{F}$ is a minimal \fusion{f}{f} of $\mathcal{P}$. 
\end{enumerate}
\end{theorem}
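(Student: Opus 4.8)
The plan is to prove the three parts of Theorem~\ref{thEventState} in the order they are stated, since each builds on the structural facts established earlier (Theorems~\ref{thFaultTolerance}, \ref{thSubsetOfAFusion}, \ref{thExistenceFmFusion}, and Lemmas~\ref{lemPrimDist}--\ref{lemMonotonic}), and the machinery for one part carries over to the next.

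For \textbf{Correctness}, I would argue that the outer loop runs $f$ times, and in each iteration the chosen machine $M$ increments $\w(\m{P}\cup\m{F})$ by exactly one. The key sub-claim is that such an $M$ always exists: starting from $\RCP(\m{P})$ (which covers every edge, hence increments $\w$ by one since by Lemma~\ref{lemPrimDist} $\w(\m{P})=1$ and inductively $\w$ stays finite), each of the inner loops (state reduction, event reduction, minimality) only ever passes along machines that still increment $\w$ — if an iteration finds none, the loop exits retaining the previous machine, which does increment $\w$. So at the end of iteration $i$, $\w$ has risen to $i+1$; after $f$ iterations $\w(\m{P}\cup\m{F}) = f+1 > f$, and by Definition~\ref{defFusion} (equivalently Theorem~\ref{thFaultTolerance}) $\m{F}$ is an \fusion{f}{f}. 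One has to check that ``increments $\w$ by one'' is the correct invariant and not ``by at least one'': since any single machine raises any edge weight by at most one, and the added machine must raise all current weakest edges, the minimum goes up by exactly one.

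For \textbf{State \& Event Efficiency}, the plan is a proof by contradiction combined with the monotonicity of weakest-edge sets. Suppose some \fusion{f}{f} $\m{F}^*$ of $\m{P}$ contains a machine $M^*$ with at most $(N-\State)$ states and $(|\Sigma|-\Event)$ events. By Lemma~\ref{lemPrimEdges}, $M^*$ covers all the weakest edges of $G(\m{P})$, i.e.\ all edges in $E_1$; and by Lemma~\ref{lemMonotonic}, $E_1\subseteq E_i$ for every iteration $i$, so in fact a machine covering the current weakest edges can be sought among machines below $M^*$ (or $M^*$ itself) at every stage. The crux is to show that the breadth-first exploration performed by \emph{reduceState}/\emph{reduceEvent} is exhaustive enough: because these routines return \emph{all} largest incomparable machines obtained by combining one pair of states (resp.\ by collapsing on one event), every closed partition with at most $(N-\State)$ states sits below some machine reachable after $\State$ rounds of \emph{reduceState}, and similarly for events. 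Hence if $M^*$ exists and covers the weakest edges, the algorithm would have retained a machine with $\le (N-\State)$ states and $\le(|\Sigma|-\Event)$ events in $\m{M}$, contradicting the hypothesis that every machine the algorithm produced is strictly larger. I expect \textbf{this exhaustiveness argument to be the main obstacle}: one must argue carefully that restricting to ``largest incomparable'' representatives loses no machine that could serve as a backup, using the fact that if $P\le Q$ and $Q$ covers an edge then $P$ covers it too (smaller machines distinguish at least as many state pairs), so passing to the largest machine above a given partition is safe for the covering requirement, and that iterating the single-pair-merge reaches every partition of bounded size.

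For \textbf{Minimality}, I would show that no \fusion{f}{f} $\m{F}'$ satisfies $\m{F}' < \m{F}$. By the definition of the order among fusions, this would require reindexing so that $F_i' \le F_i$ for all $i$ with some $F_j' < F_j$. The argument is that the minimality loop drives each $F_i$ as far down the closed partition lattice as possible while still incrementing $\w$: after the loop, every state reduction of $F_i$ fails to increment $\w$. If some $F_j' < F_j$ were part of a valid fusion, then $F_j'$ still covers the weakest edges (Lemma~\ref{lemPrimEdges} together with Lemma~\ref{lemMonotonic}), so $F_j'$ increments $\w$ at stage $j$ — but then the minimality loop would not have stopped at $F_j$, since a reachable strictly smaller machine incrementing $\w$ still existed. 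This yields the contradiction, establishing that $\m{F}$ is minimal. The subtlety here is matching the per-coordinate order of the fusion-ordering definition with the coordinatewise operation of the algorithm, which is straightforward since the algorithm produces the $f$ backups independently one per outer iteration.
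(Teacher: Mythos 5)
Your treatment of Correctness and of State \& Event Efficiency follows the paper's own route (add one machine per outer iteration that raises $\w$ by exactly one; for efficiency, use Lemma~\ref{lemPrimEdges} plus the exhaustiveness of the first iteration's search), and your extra care about why the restriction to largest incomparable representatives loses nothing is a reasonable elaboration of a step the paper merely asserts. One slip there: with the paper's order, $P\leq Q$ means each block of $Q$ is contained in a block of $P$, so it is the \emph{larger} machine $Q$ that refines $P$ and covers at least every edge $P$ covers. Your conclusion (``passing to the largest machine above a given partition is safe for the covering requirement'') is correct, but the parenthetical fact you invoke to justify it is stated in the wrong direction.

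The Minimality argument, however, has a genuine gap. You assume that $F_j'$ covers the weakest edges $E_j$ at stage $j$, citing Lemmas~\ref{lemPrimEdges} and \ref{lemMonotonic}; but Lemma~\ref{lemPrimEdges} only yields covering of $E_1$, the weakest edges of $G(\mathcal{P})$, while $E_j$ for $j>1$ is determined by the algorithm's own earlier choices $F_1,\ldots,F_{j-1}$ and in general strictly contains $E_1$. Nothing you cite forces a machine of an arbitrary competing fusion to cover the edges in $E_j\setminus E_1$, and if $F_j'$ misses some $e\in E_j$ your contradiction with the stopping condition of the minimality loop simply does not arise. The paper closes exactly this case with a counting argument: by Lemma~\ref{lemMonotonic} an edge $e\in E_i$ stays among the weakest thereafter, so $F_i,\ldots,F_f$ all cover it and $e$ finishes with weight exactly $f+1$ --- there is no slack. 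Since $G_j\leq F_j$ implies $G_j$ covers no edge that $F_j$ does not, a competing fusion $\mathcal{G}\leq\mathcal{F}$ in which $G_i$ drops even one coverage of $e$ leaves $e$ with weight at most $f$, so $\mathcal{G}$ is not an \fusion{f}{f}. This tightness argument is precisely what justifies the premise you assert (every $G_j$ of a valid smaller fusion must cover $E_j$); without it, or an equivalent, your minimality proof is incomplete.
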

\begin{proof} 
\begin{enumerate}
\item From lemma \ref{lemPrimDist}, $\w(\m{P})=1$. Starting
with the $\RCP$, which always increases $\w$ by one, we add one machine in each iteration  to
$\mathcal{F}$ that
increases by $\w(\mathcal{P} \cup \mathcal{F})$ by one. Hence, at the end of $f$
iterations of the \emph{genFusion} algorithm, we add exactly $f$ machines to $\mathcal{F}$ that
increase $\w$ to $f+1$. Hence, $\mathcal{F}$ is an \fusion{f}{f} of $\mathcal{P}$.  

\item Assume that each machine in $\mathcal{F}$ has greater than $(N-\State)$
states and $(|\Sigma|-\Event)$ events.
 Let there be another \fusion{f}{f} of $\m{P}$ that contains a
machine $F'$ with less than or equal to $(N-\State)$ states and $(|\Sigma|-\Event)$ events. From
lemma \ref{lemPrimEdges}, $F'$ covers the weakest edges in $G(\mathcal{P})$. However, in the first
iteration of the outer loop, the \emph{genFusion} algorithm searches exhaustively for a fusion with less than or equal
to $(N-\State)$ states and $(|\Sigma|-\Event)$ events that covers the weakest edges in $G(\m{P})$.
Hence, if such a machine $F'$ existed, then the algorithm  would have chosen it. 

\item Let there be an \fusion{f}{f} $\mathcal{G} =\{
G_1,..G_f\}$ of $\m{P}$, such that $\mathcal{G}$ is less than
\fusion{f}{f} $\mathcal{F} = \{F_2,F_1,...,F_f\}$. Hence  $\forall j : G_j \leq F_j$. Let $G_i < F_i$ and let $E_i$ be the set of edges that needed to
be covered by $F_i$. It follows from the \emph{genFusion} algorithm, that $G_i$ does not cover at least one edge say $e$ in $E_i$ (otherwise
the algorithm would have returned $G_i$ instead of $F_i$). From lemma
\ref{lemMonotonic}, it follows that if $e$ is covered by $k$ machines
in $\mathcal{F}$, then $e$ has to be covered by  $k$ machines in $\mathcal{G}$. We know that there is a pair of machines  $F_i, G_i$ such that $F_i$
covers $e$ and $G_i$ does not cover $e$. For all other pairs $F_j, G_j$  if $G_j$ covers $e$ then $F_j$ covers $e$ (since $G_j \leq F_j$). Hence $e$
can be covered by no more than $k-1$ machines in $\mathcal{G}$. This implies that $\mathcal{G}$ is
not \fusion{f}{f}. 
\end{enumerate}

\end{proof}

\subsection{Time Complexity of the \emph{genFusion} Algorithm} \label{secFsmTc}
The time complexity of the \emph{genFusion} algorithm is the sum of the time complexities of
the inner loops multiplied by the number of iterations, $f$. We analyze the time complexity of each
of the inner loops. Let the set of machines in $\m{M}$ at the start of the $i^{th}$ iteration of the outer
loop be denoted $\m{M}_i$. 

\emph{State Reduction Loop}: The time complexity of the state
reduction loop for the $i^{th}$ iteration of the outer loop is $T_1+T_2$, where $T_1$ is the time complexity to reduce the states of the
machines in
$\m{M}_i$ and $T_2$ is the time complexity to find  the machines among $\m{S}$ that
increment $\w$. First, let us consider $T_1$. Note that, initially $\m{M}$, i.e, $\m{M}_1$,  contains only the $\RCP$ with $O(N)$ states and for
any iteration of the state reduction loop, each of the machines in $\m{M}_i$ has $O(N)$ states.  Given a machine $M$ with $O(N)$
states, the \emph{reduceState} algorithm generates machines with fewer states than $M$. For each pair of states in
$M$, the time complexity to generate the largest closed partition that contains these states in a
combined block is just $O(N |\Sigma|)$. Since there are $O(N^2)$ pairs of states in $M$, the time
complexity of the \emph{reduceState} algorithm is $O(N^3
|\Sigma|)$. Hence, $T_1 = O(|\m{M}_i|N^3|\Sigma|)$. 

Now, we consider $T_2$. Since, there are $O(N^2)$ pairs of states in each machine in $\m{M}_i$, the
\emph{reduceState} algorithm returns $O(N^2)$ machines. So, $|\m{S}|=O(N^2|\m{M}_i|)$. 
Since there are $O(N^2)$ nodes in the fault graph of $G(\m{P \cup F})$, given
any machine in $\m{S}$, the time complexity to check if it increments $\w$ is $O(N^2)$. Hence,
$T_2=O(|\m{S}|N^2)=O(N^4|\m{M}_i|)$. So, the time complexity of each iteration of the state reduction
loop is $T_1+T_2 = O(|\m{M}_i|N^3|\Sigma| + N^4|\m{M}_i|)$. 

Since the \emph{reduceState} algorithm generates $O(N^2)$ machines per machine in $\m{M}_i$,
$|\m{M}_{i+1}| = N^2|\m{M}_i|$. In
the first iteration $\m{M}$ just contains the $\RCP$ and  $|\m{M}_1| = 1$. Hence, the time
complexity of the state reduction loop is,
$O((N^3|\Sigma| + N^4)(1+N^2+N^4\ldots+ N^{2(\State-1)}))= O((N^3|\Sigma| +
N^4)(\frac{N^{2\State}-1}{N^2-1})$ (the series is a geometric progression). This reduces to 
$O(N^{\State+1}|\Sigma|+N^{\State+2})$. Also, $\m{M}$ contains $O(N^{2\State})$ machines at the
end of the state reduction loop. 

\emph{Event Reduction Loop}: The time complexity analysis for the event reduction loop is similar,
except for the fact that the \emph{reduceEvent} algorithm iterates through $|\Sigma|$ events of the
each machine in $\m{M}$ and returns $O(|\Sigma|)$ machines per machine in $\m{M}$. Also, while the
state reduction loop starts with just one machine in $\m{M}$, the event reduction loop starts with
$O(N^{2\State})$ machines in $\m{M}$.  Hence, the time complexity of each iteration
of the event reduction loop is $O((N|\Sigma|^2 + N^2 |\Sigma|)(N^{2\State})(1+ |\Sigma|+ |\Sigma|^2 \ldots+
|\Sigma|^{\Event-1})) = O((N|\Sigma|^2 + N^2
|\Sigma|)(N^{2\State})(\frac{|\Sigma|^{\Event}-1}{|\Sigma|-1})) =
O(N^{\State+1}|\Sigma|^{\Event+1}+ N^{\State+2}|\Sigma|^{\Event})$. 

\emph{Minimality Loop}: In the minimality loop, we use the \emph{reduceState} algorithm, but only
select one machine per iteration. Also, in each iteration of the minimality loop, the number of
states in $M$ is at least one less than than the number of states in $M$ for the previous iteration. Hence,
the minimality loop executes $O(N)$ iterations with total time complexity, $O((N^3|\Sigma| + N^4)(N))= O(N^4|\Sigma|+N^5)$. 

Since there are $f$ iterations of the outer loop, the time complexity of the \emph{genFusion}
algorithm is, $$O(fN^{\State+1}|\Sigma|+fN^{\State+2}+$$ $$fN^{\State+1}|\Sigma|^{\Event+1}+
fN^{\State+2}|\Sigma|^{\Event}+fN^4|\Sigma|+fN^5)$$ This reduces to, $$O(fN^{\State+1}|\Sigma|^{\Event+1}+
fN^{\State+2}|\Sigma|^{\Event}+fN^4|\Sigma|+fN^5)$$ 

\begin{observation}\label{obGenFusTc} For parameters $\State =0$ and $\Event =0$, the \emph{genFusion} algorithm
generates a minimal \fusion{f}{f} of $\m{P}$ with time complexity $O(fN^4|\Sigma|+fN^5)$, i.e., the time
complexity is polynomial in the number of states of the $\RCP$.
\end{observation}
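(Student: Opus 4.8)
The plan is to derive Observation~\ref{obGenFusTc} as an immediate specialization of Theorem~\ref{thEventState} and of the general time-complexity bound established just above; essentially no new argument is required, so the work is in checking that the specialization is legitimate.

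First I would dispatch correctness and minimality. Theorem~\ref{thEventState}(1) and (3) already establish, for arbitrary $\State$ and $\Event$, that \emph{genFusion} returns a minimal \fusion{f}{f} of $\m{P}$; in particular this holds when $\State = 0$ and $\Event = 0$. The only thing to note is that with these parameter values the state-reduction and event-reduction loops perform zero iterations, so when the minimality loop starts we still have $\m{M} = \{\RCP\}$. By Lemma~\ref{lemPrimDist}, $\w(\m{P}) = 1$, and the $\RCP$ covers every edge of the fault graph and hence increments $\w$; so the minimality loop has a valid starting machine and the minimality argument in the proof of Theorem~\ref{thEventState}(3) carries over verbatim. (Clause~(2) of that theorem is vacuous here, since every closed partition of the $\RCP$ has at most $N$ states and at most $|\Sigma|$ events.)

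Second I would read off the running time. Substituting $\State = 0$, $\Event = 0$ into the bound $O(fN^{\State+1}|\Sigma|^{\Event+1} + fN^{\State+2}|\Sigma|^{\Event} + fN^4|\Sigma| + fN^5)$ derived above gives $O(fN|\Sigma| + fN^2 + fN^4|\Sigma| + fN^5)$, whose dominant terms are the last two, i.e.\ $O(fN^4|\Sigma| + fN^5)$. Operationally: the two reduction loops contribute nothing, and the cost is that of the $f$ executions of the minimality loop, each of which makes $O(N)$ calls to \emph{reduceState} at cost $O(N^3|\Sigma| + N^4)$, together with the $O(N^2)$ check per produced candidate for whether $\w$ increments, i.e.\ $O(N^4|\Sigma| + N^5)$ per outer iteration. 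Since this is polynomial in $N$ (and in $f$ and $|\Sigma|$), the claim follows. I do not anticipate any real obstacle; the one point worth a sentence of care is verifying that the inner loops are genuinely skipped when $\State = \Event = 0$ so that the earlier correctness and minimality proofs apply unchanged, which is exactly what the appeal to Lemma~\ref{lemPrimDist} above handles.
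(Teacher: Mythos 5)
Your proposal is correct and matches the paper's own (implicit) justification: the observation is stated as an immediate consequence of substituting $\State=\Event=0$ into the general time bound derived just above it, with correctness and minimality inherited from Theorem~\ref{thEventState}. Your added check that the reduction loops are vacuously skipped and the minimality loop starts from the $\RCP$ (which covers all edges, so $\w$ is incremented) is a reasonable bit of extra care but does not change the argument.
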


If there are $n$ primaries each with $O(s)$ states,
then $N$  is $O(s^n)$. Hence, the time complexity of the  \emph{genFusion} algorithm reduces to $O(s^n|\Sigma|f)$. Even though the time complexity of
generating the fusions is exponential in $n$, note that
the fusions have to be generated only once. Further, in Appendix \ref{secAppIncFusion}, we present an incremental approach for the generation of
fusions that improves the time complexity by a factor
of $O(\rho^n)$ for constant values of $\rho$, where $\rho$ is the average state reduction 
achieved by fusion, i.e., (Number of states in the $\RCP$/Average number of states in each fusion
machine).

\section{Detection and Correction of Faults}\label{secFaults}

In this
section, we provide algorithms to detect Byzantine faults with time complexity $O(n f)$,
on average, and correct crash/Byzantine
faults with time complexity $O(n  \rho  f)$, with high probability, where $n$ is the number of primaries, $f$ is the
number of crash faults and $\rho$ is the average state reduction
achieved by fusion. Throughout this section, we refer to Fig. \ref{figStEvRedLattice}, with primaries, $\mathcal{P}=\{A,B,C\}$ and backups
 $\mathcal{F} = \{F_1,F_2\}$, that can correct two crash faults. The execution state of the primaries is represented
collectively as a $n$-tuple (referred to as the \emph{primary tuple}) while the state of each
backup/fusion is represented as the set of primary tuples it
corresponds to (referred to as the \emph{tuple-set}). In Fig. \ref{figStEvRedLattice}, if $A$, $B$, $C$ and $F_1$ are in their initial states, then
the primary tuple is $a^0b^0c^0$  and the state of $F_1$ is $f_1^0=\{a^0b^0c^0, a^1b^0c^1,
a^1b^1c^0,a^0b^1c^1\}$ (which corresponds to $\{r^0,r^2,r^4,r^5\}$). 

\begin{figure*}[htb] \begin{center}
\fbox{\begin{minipage}[b]  {0.45\linewidth}
{\small
\emph{detectByz}\\
\h{\bf Input}: set of $f$ fusion states  $B$, primary tuple $r$; \\
\h{\bf Output}: \emph{true} if there is a Byzantine fault and \emph{false} if not;  \\
\h{\bf for} $(b \in B)$ \\
\h	\h{\bf if}  $\neg (\emph{hash\_table}(b) \cdot \emph{contains}(r))$\\
\h		 \h\h {\bf return} \emph{true}; \\
\h {\bf return} \emph{false};\\\\
\emph{correctCrash}\\
\h{\bf Input}: set of available fusion states $B$, primary tuple $r$,\\ \h faults among primaries $t$;\\
\h{\bf Output}: corrected primary $n$-tuple; \\
\h $D \leftarrow \{\}$ //list of tuple-sets\\
\h //find tuples in $b$ within Hamming distance $t$ of $r$ \\
\h {\bf for} $(b \in B)$\\
\h	 \h $S \leftarrow \emph{lsh\_tables}(b) \cdot  \emph{search}(r,t)$;\\
 \h	 \h $D \cdot \emph{add}(S)$; \\
\h {\bf return} Intersection of sets in $D$;
}

\end{minipage}
} 
\fbox{\begin{minipage}[b]  {0.45\linewidth}
{\small
\emph{correctByz}\\
\h{\bf Input}: set of $f$ fusion states $B$, primary tuple $r$; \\
\h{\bf Output}: corrected primary $n$-tuple; \\
\h $D \leftarrow \{\}$ //list of tuple-sets\\
\h  //find tuples in $b$ within Hamming distance  $\lfloor f/2 \rfloor$ of $r$ \\
\h {\bf for} $(b \in B)$\\ 
\h	\h $S \leftarrow \emph{lsh\_tables}(b) \cdot  \emph{search}(r,\lfloor f/2 \rfloor)$;\\
\h	\h $D \cdot \emph{add}(S)$; \\
\h $G \leftarrow$ Set of tuples that appear in $D$;\\
\h $\emph{V} \leftarrow$ Vote array of size $|G|$;\\
\h{\bf for} $(g \in G)$ \\
\h		\h // get votes from fusions  \\
\h		\h $V[g] \leftarrow$ Number of times $g$ appears in $D$; \\
\h		\h // get votes from primaries \\
\h		\h{\bf for} {$(i=1$ to $n)$} \\
\h			\h\h{\bf if}$(r[i] \in g)$ \\
\h				\h\h\h $V[g]++$; \\
\h{\bf return} Tuple $g$ such that $V[g] \geq n+\lfloor f/2 \rfloor$; 
}

\end{minipage}
} 
\end{center}
\caption[ ]{Detection and correction of faults.}
\label{figFaultAlgo}
\end{figure*}

\subsection{Detection of Byzantine Faults}\label{secByzDet} 
Given the primary tuple and the tuple-sets corresponding to the fusion states, the
\emph{detectByz} algorithm in Fig. \ref{figFaultAlgo} detects up to $f$ Byzantine faults (liars).
Assuming that the tuple-set of each fusion state is stored in a permanent hash table
at the recovery agent, the \emph{detectByz} algorithm simply checks if the primary tuple $r$ is
present in each backup tuple-set $b$. In Fig. \ref{figStEvRedLattice}, if the states of machines
$A$, $B$, $C$, $F_1$
and $F_2$ are $a^1$, $b^1$, $c^0$, $f_1^1$ and $f_2^1$ respectively, then the algorithm flags a
Byzantine fault, since $a^1b^1c^0$ is not present in either $f_1^1=\{a^0b^1c^0, a^1b^1c^1,
a^0b^0c^1,a^1b^0c^0\}$  or $f_2^1=\{a^0b^1c^0, a^1b^0c^1\}$. 

To show that $r$  is not  present in at least one of the
backup tuple-sets in $B$ when there are liars, we make two observations. First, we are only concerned
about machines that lie within their state set.
For example, in Fig. \ref{figStEvRedLattice}, suppose the true state of $F_2$ is $f_2^0$. To lie,  
if $F_2$ says it state is any number apart from $f_2^1$, $f_2^2$ and $f_2^3$, then that can
be detected easily. 

Second, like the fusion states, each primary state can be expressed as
a tuple-set that contains the $\RCP$ states it belongs to. Immaterial of whether $r$ is 
correct or incorrect (with liars), it will be present in all the truthful primary
states. For example, in Fig. \ref{figStEvRedLattice}, if the correct primary tuple is $a^0b^0c^0$ 
then $a^0=\{a^0b^0c^0,a^0b^1c^0,a^0b^1c^1,a^0b^0c^1\}$ contains $a^0b^0c^0$. If $B$ lies, then the
primary tuple will be $a^0b^1c^0$, which is incorrect. Clearly, $a^0$ contains this incorrect primary tuple as well.  

%
\begin{theorem}\label{thByzDet} Given a set of $n$ machines $\mathcal{P}$ and an \fusion{f}{f} $\mathcal{F}$
corresponding to it, the \emph{detectByz} algorithm detects up to $f$ Byzantine faults among them.  
\end{theorem}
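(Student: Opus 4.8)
The plan is to show that the \emph{detectByz} algorithm returns \emph{true} whenever at least one and at most $f$ of the reported states (among the $n$ primaries and the $f$ fusions) are lies, and returns \emph{false} when all reported states are truthful. The second direction is immediate: if every machine is truthful, then the true primary tuple $r$ lies in every fusion's tuple-set by construction of the fusion (each fusion state is a block of a closed partition of the $\RCP$, hence a set of $\RCP$-states, i.e.\ a tuple-set containing $r$), so no check in the loop fails and the algorithm returns \emph{false}. So the work is in the forward direction: a liar must be caught.

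First I would reduce to the case the authors flag in the two observations preceding the statement. A machine lying outside its own state set is trivially detected (the reported symbol is not even a valid state), so assume every faulty machine reports some state in its own state set that differs from its true state. Next I would appeal to Theorem~\ref{thByzDet}'s hypothesis via Theorem~\ref{thByzFaultTolerance} / the fault-graph machinery: since $\m{F}$ is an \fusion{f}{f} of $\m{P}$, we have $\w(\m{P}\cup\m{F}) > f$, i.e.\ $\w(\m{P}\cup\m{F}) \geq f+1$. So \emph{every} pair of distinct $\RCP$-states is separated by at least $f+1$ machines in $\m{P}\cup\m{F}$.

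Now suppose for contradiction that the algorithm returns \emph{false}, i.e.\ the reported primary tuple $r'$ is contained in all $f$ reported fusion tuple-sets, yet there is at least one liar. Since there is a liar, $r'$ is not the true primary tuple $r$ (a truthful primary reports a symbol whose tuple-set contains $r$, and the $n$ primary symbols together pin down a unique $\RCP$-state, namely $r$; if all primaries were truthful then $r'=r$ and the liar is a fusion — handle that sub-case separately below). Consider the distinct pair $(r, r')$. By the distance bound, at least $f+1$ machines in $\m{P}\cup\m{F}$ separate $r$ from $r'$ — that is, put $r$ and $r'$ in different blocks. For each such separating machine $M$: if $M$ is truthful, its reported state (the block containing the \emph{true} $\RCP$-state $r$) does \emph{not} contain $r'$; if $M$ is a primary whose reported symbol's block contains $r'$, then that block does not contain $r$, so $M$ is lying. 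I would make this precise as: among the $\geq f+1$ machines separating $r$ and $r'$, each one is either a liar or else bears witness against $r'$. A truthful separating primary would contradict $r'$ being consistent with the reported primary tuple; a truthful separating fusion would have failed the loop check (its reported tuple-set, containing the true $r$, excludes $r'$) — contradicting the assumption that the algorithm returned \emph{false}. Hence every one of the $\geq f+1$ separating machines must be a liar, contradicting the bound of at most $f$ Byzantine faults. The sub-case where all primaries are truthful (so $r'=r$) but some fusion lies is even simpler: that fusion reports a block \emph{not} containing $r=r'$, so the loop check fails on it and the algorithm returns \emph{true}.

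The main obstacle I expect is bookkeeping the roles cleanly: the ``primary tuple $r$'' passed to the algorithm is the \emph{reported} tuple $r'$, which may itself be corrupted, so one must carefully distinguish the true $\RCP$-state from the reported one and track which of the $\le f$ lies live among primaries versus fusions. Once the correct pair $(r_{\text{true}}, r_{\text{reported}})$ is identified and the $\w \geq f+1$ separation bound is applied to it, the counting argument (every separating machine is forced to be a liar) closes quickly. A minor technical point to state explicitly is that each primary state, like each fusion state, is a block of a closed partition of the $\RCP$ and hence corresponds to a tuple-set, so the separation language of the fault graph applies uniformly to primaries and fusions alike.
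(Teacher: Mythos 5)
Your proposal is correct and follows essentially the same route as the paper: both arguments fix the pair consisting of the true $\RCP$ state and the reported primary tuple, invoke $\w(\mathcal{P}\cup\mathcal{F})>f$ to get at least $f+1$ separating machines, and observe that every truthful machine (primary or fusion) must contain both states in the same block if the check passes, forcing all separators to be liars --- a contradiction with at most $f$ faults. Your version merely states the contrapositive of the paper's counting step and adds the (correct, if unstated in the paper) converse that no fault is flagged when all machines are truthful.
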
 
\begin{proof}
Let $r$ be the correct primary tuple. Each primary tuple is present in exactly one
fusion state (the fusion states partition the $\RCP$ states), i.e, the correct fusion
state. Hence, the incorrect fusion states (liars) will not contain $r$ and the fault will be detected.  
If $r$ is incorrect (with liars), then for the fault to go undetected, $r$ must be present
in all the fusion states. 

If $r^{c}$ is the correct primary tuple, then the truthful fusion
states have to contain $r^{c}$ as well, which implies that they contain
$\{r,r^{c}\}$ in the same tuple-set. As observed above, the truthful 
primaries will also contain $\{r,r^{c}\}$ in the same tuple-set. So the execution state of all
the truthful machines contain $\{r,r^{c}\}$ in the same tuple-set. Hence less than or equal to $f$ machines,
i.e, the liars, can contain $r$ and $r^{c}$ in distinct tuple-sets. This contradicts the fact that $\mathcal{F}$ is a
\fusion{f}{f} with greater than $f$ machines separating each pair of $\RCP$ states. 
\end{proof}

We consider the space complexity for maintaining the hash tables at the \recover{}. Note that, the
space complexity to maintain a hash table is simply the number of points in the hash table
multiplied by the size of each point. In our solution we hash the tuples belonging to the fusion
states. In each fusion machine, there are $N$ such tuples, since the fusion states partition the
states of the $\RCP$. 
Each tuple contains $n$ primary states each of size $\log s$, where $s$ is the maximum number of
states in any primary. For example, $a^0b^1c^0$ in $f_1^1$ contains three primary states $(n=3)$ and since
there are two states in $A$ $(s=2)$ we need just one bit to represent it. Since there are $f$
fusion machines, we hash a total of $Nf$ points, each of size $O(n\log s)$. Hence, the space complexity
at the \recover{} is $O(Nfn\log s)$. 

Since each fusion state is maintained as a hash table, it will take $O(n)$ time (on average) to check
if a primary tuple with $n$ primary states is present in the fusion state.  Since there are $f$
fusion states, the time complexity for the \emph{detectByz}
algorithm  is $O(n  f)$ on average. Even for replication,
the \recover{} needs to compare the state of $n$ primaries with the state of each of its $f$
copies, with time complexity $O(n  f)$. In terms of message complexity, in fusion, we need to
acquire the state of $n+f$ machines to detect the faults, while for replication, we need to acquire
the state of $2nf$ machines. 

\subsection{Correction of Faults}
Given a primary tuple $r$ and the tuple-set of a fusion state, say $b$, consider the problem of finding the tuples in $b$ 
that are within Hamming distance $f$ of $r$. This is the key concept that we use for the correction
of faults, as explained in sections \ref{secCrashCorr} and \ref{secByzCorr}.  
In Fig. \ref{figStEvRedLattice}, the tuples in $f_1^0=\{a^0b^0c^0, a^1b^0c^1,
a^1b^1c^0,a^0b^1c^1\}$ that are within Hamming distance one of a
primary tuple $a^0b^0c^1$ are  $a^0b^0c^0$, $a^1b^0c^1$ and $a^0b^1c^1$. An efficient solution to finding
the points among a large set within a certain Hamming distance of a query point is \emph{locality sensitive hashing} (LSH)
\cite{andoniIndyk06,Gionis97similaritysearch}. Based on this, we first select $L$ hash
 functions $\{g_1 \ldots g_L\}$ and for each $g_j$ we associate an ordered set (increasing order) of $k$ numbers
$C_j$ picked uniformly at random from $\{0 \ldots n\}$. The hash function $g_j$ takes as input an
$n$-tuple, selects the coordinates from them as specified by the numbers in $C_j$ and returns  the
concatenated bit representation of these coordinates. At the \recover{}, for each fusion state we
maintain $L$ hash tables, with the functions selected above, and hash each tuple in the fusion state. 
In Fig. \ref{figLSH} $(i)$, $g_1$ and $g_2$ are associated
with the sets $C_1=\{0,1\}$ and $C_2=\{0,2\}$ respectively.  
Hence, the tuple $a^1b^0c^1$ of $f_1^0$, is hashed into the $2^{nd}$ bucket of $g_1$ and the $3^{rd}$
bucket of $g_2$. 

\begin{figure*}[htb] 
\centerline{ 
\scalebox{0.40}{ 
\includegraphics{./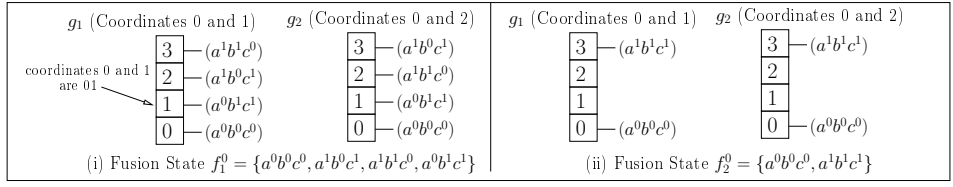} } } 
\caption{LSH example for fusion states in Fig. \ref{figStEvRedLattice} with $k=2$, $L=2$.} \label{figLSH} 
\end{figure*} 

Given a primary tuple $r$ and a fusion state $b$, to find the tuples among $b$ that are within a
Hamming distance $f$ of $r$, we obtain
the points found in the buckets $g_j(r)$ for $j = 1 \ldots L$ maintained for $b$ and return those that are within
distance of $f$ from $r$. In Fig. \ref{figLSH} $(i)$, let $r=  a^0b^1c^0$, $f=2$, and $b =f_1^0$.
The primary tuple $r$ hashes
into the $1^{st}$ bucket of $g_1$ and the $0^{th}$ bucket of $g_2$ which contains the points
$a^0b^1c^1$ and $a^0b^0c^0$ respectively. Since both of them are withing Hamming distance two of
$r$, both the points are returned. If we set $L = \log_{1-\gamma^k}\delta$, where
$\gamma=1-f/n$, such that $(1-\gamma^k)^L \leq \delta$, then any $f$-neighbor of a point $q$
is returned with probability at least $1-\delta$ \cite{andoniIndyk06,Gionis97similaritysearch}. 
In the following sections, we present algorithms for the correction of crash and Byzantine faults
based on these LSH functions. 

\subsubsection{Crash Correction}\label{secCrashCorr} Given the primary tuple (with
possible gaps due to faults) and the tuple-sets of the available
fusion states, the \emph{correctCrash} algorithm in Fig. \ref{figFaultAlgo} corrects up to $f$
crash faults. The algorithm finds the set of tuple-sets $S$ in each fusion state $b$, where each
tuple belonging to $S$ is
within a Hamming distance $t$ of the primary tuple $r$. Here, $t$ is the number of faults among the
primaries. To do this efficiently, we use the LSH
tables of each fusion state.  The set $S$ returned for each fusion state is stored in a list $D$. 
If the intersection of the sets in $D$ is singleton, then we return
that as the correct primary tuple. If the intersection is empty, we need to exhaustively
search each fusion state for points within distance $t$ of $r$ (LSH has not returned
all of them), but this happens with a very low probability
\cite{andoniIndyk06,Gionis97similaritysearch}. 

In Fig. \ref{figStEvRedLattice}, assume crash faults in $B$ and $C$.   
Given the states of $A$, $F_1$
and $F_2$ as $a^0$, $f_1^0$ and $f_2^0$ respectively, the tuples within Hamming distance two of $r
=a^0.\{empty\}.\{empty\}$ among states $f_1^0=\{a^0b^0c^0, a^1b^0c^1, a^1b^1c^0,a^0b^1c^1\}$ and  $f_2^0 =\{a^0b^0c^0, a^1b^1c^1\}$
 are $\{a^0b^0c^0, a^0b^1c^1\}$ and $\{a^0b^0c^0\}$ respectively. The algorithm returns
their intersection, $a^0b^0c^0$ as the corrected primary tuple. In the following theorem, we prove
that the \emph{correctCrash} algorithm returns a unique primary tuple. 

\begin{theorem}\label{thmCrashCorr}
Given a set of $n$ machines $\mathcal{P}$ and an \fusion{f}{f} $\mathcal{F}$
corresponding to it, the \emph{correctCrash} algorithm corrects up to $f$ crash faults among them.  
\end{theorem}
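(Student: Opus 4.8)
The plan is to show that, in the ideal case where the LSH search returns every qualifying tuple, the list $D$ assembled by \emph{correctCrash} has the property that the intersection of its members is exactly the singleton $\{r^c\}$, where $r^c\in X_{\RCP}$ is the true current state of the reachable cross product; reading off the $n$ coordinates of $r^c$ then recovers the states of all crashed primaries (and, via the order relation of section~\ref{secOrder}, of any crashed fusions). Afterwards I would argue that LSH undershoot never produces a wrong answer, only an empty intersection that the algorithm already handles by an exhaustive fallback.

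First I would fix notation. Suppose at most $f$ crash faults occur among $\m{P}\cup\m{F}$, of which $t$ hit primaries and $q$ hit fusions, so $t+q\le f$. Then the primary tuple $r$ has $n-t$ correct, known coordinates and $t$ gaps, and the set $B$ of available fusion states has size $f-q\ge t$. The key preliminary observation is that, since no genuine primary state equals the ``empty'' gap symbol, a tuple $r'\in X_{\RCP}$ is within Hamming distance $t$ of $r$ if and only if $r'$ agrees with $r$ on every one of the $n-t$ known coordinates; write $S_b$ for the set of such ``consistent'' tuples lying inside a given fusion state $b$, so that the ideal algorithm returns $\bigcap_{b\in B}S_b$.

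Next I would prove that $r^c\in\bigcap_{b\in B}S_b$ and that it is the only such element. Membership: each fusion state is a block of a closed partition of $X_{\RCP}$, so the true state $b$ of every available fusion is precisely the block containing $r^c$, hence $r^c\in b$; and $r^c$ agrees with $r$ on every non-crashed coordinate, so $r^c\in S_b$. Uniqueness is the crux: suppose $r'\in\bigcap_{b\in B}S_b$ with $r'\ne r^c$. Both $r'$ and $r^c$ agree with $r$ on the $n-t$ non-crashed primary coordinates, hence agree with each other there, so no non-crashed primary separates them (a primary separates two $\RCP$ tuples exactly when they differ in that primary's coordinate); and for every available fusion $b$, both $r'$ and $r^c$ lie in $b$, i.e.\ in the same partition block, so no available fusion separates them either. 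Therefore every machine separating $r'$ from $r^c$ is a crashed machine, so there are at most $t+q\le f$ of them --- contradicting the defining property of an \fusion{f}{f} (equivalently, Theorem~\ref{thFaultTolerance}), namely $\w(\m{P}\cup\m{F})>f$, so that at least $f+1$ machines separate every pair of distinct $\RCP$ states. Hence $\bigcap_{b\in B}S_b=\{r^c\}$, and the states of all crashed primaries are read off from $r^c$.

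Finally I would close the LSH gap. The tuple set returned by the LSH search for an available fusion state $b$ is always a subset of the true consistent set $S_b$ (it is obtained by scanning the hash buckets of $b$ and then filtering by Hamming distance), so the intersection the algorithm actually computes is a subset of $\{r^c\}$: it is either $\{r^c\}$, in which case the output is correct, or empty, which is exactly the case in which \emph{correctCrash} recomputes each $S_b$ by an exhaustive scan and so still returns $r^c$. I expect the uniqueness step to be the main obstacle --- in particular the assertion that every separator of $r'$ and $r^c$ must be a crashed machine --- since it depends on getting the fault-graph semantics precisely right (a primary separates a pair of $\RCP$ states iff they differ in its coordinate, a fusion separates a pair iff the two states lie in different blocks of its partition) and on the fusion states forming a genuine partition of $X_{\RCP}$; the distance-$t$/known-coordinate equivalence and the membership of $r^c$ are routine once that is settled.
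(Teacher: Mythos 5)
Your proof is correct and follows essentially the same route as the paper's: characterize the tuples within Hamming distance $t$ of $r$ as those agreeing on the known coordinates, then derive a contradiction with $\w(\mathcal{P}\cup\mathcal{F})>f$ by observing that two distinct $\RCP$ states surviving the intersection could only be separated by the (at most $f$) failed machines. Your write-up is in fact slightly more complete than the paper's, since you explicitly verify that $r^c$ lies in the intersection and that the LSH search can only undershoot (handled by the exhaustive fallback), both of which the paper leaves implicit.
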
 
\begin{proof} 
Since there are $t$ gaps due to $t$ faults in the primary tuple $r$, the tuples among the backup tuple-sets within a Hamming distance $t$ of $r$, are
the tuples that contain $r$ (definition of Hamming distance). Let us assume that the intersection of
the tuple-sets among the fusion states containing $r$ is not singleton. Hence all the available fusion states
have at least two $\RCP$ states, $\{r^i,r^j\}$, that contain $r$. Similar to the proof in theorem
\ref{thByzDet}, since both $r^i$ and
$r^j$ contain $r$, these states will be present in the same tuple-sets of all the available primaries 
as well.  Hence less than or equal to $f$ machines,
i.e, the failed machines, can contain $r^i$ and $r^j$ in distinct tuple-sets. This contradicts the
fact that $\mathcal{F}$ is an
\fusion{f}{f} with greater than $f$ machines separating each pair of $\RCP$ states. 
\end{proof}

The space complexity analysis is similar to that for Byzantine detection since we maintain hash
tables for each fusion state and hash all the tuples belonging to them. Assuming $L$ is a constant,
the space complexity of storage at the \recover{} is $O(Nfn\log s)$. 

Let $\rho$ be the average state reduction achieved by our fusion-based technique. Each fusion
machine partitions the states of the $\RCP$ and the average size of
each fusion machine is $N/\rho$. Hence, the number of tuples (or points) in each fusion state is
$\rho$. This implies that there can be $O(\rho)$ tuples in each fusion state that are within
distance $f$ of $r$. So, the cost of hashing $r$ and retrieving $O(\rho)$ $n$-dimensional points from $O(f)$
fusion states in $B$ is $O(n \rho f)$ w.h.p (assuming $k,L$ for the LSH tables are constants). So,
the cost of generating $D$ is $O(n \rho f)$ w.h.p. Also, the number of tuple sets in $D$ is $O(\rho f)$. 

In order to find the intersection of the
tuple-sets in $D$ in linear time, we can hash the elements of the
smallest tuple-set and check if the elements of the other tuple-sets are part of this set. 
The time complexity to find the intersection among the
$O(\rho f)$ points in $D$, each of size $n$ is simply $O(n \rho f)$. Hence, the overall time
complexity of the \emph{correctCrash} algorithm  is $O(n \rho f)$ w.h.p. Crash correction in replication
involves copying the state of  the copies of the $f$ failed primaries which has time complexity
$\theta(f)$. In terms of message complexity, in fusion, we need to acquire the state of all $n$ machines
that remain after $f$ faults. In replication we just need to acquire the copies of the $f$ failed
primaries. 

\subsubsection{Byzantine Correction}\label{secByzCorr}

Given the primary tuple  and the tuple-sets of the 
fusion states, the \emph{correctByz} algorithm in Fig. \ref{figFaultAlgo} corrects up to $\myfloor{f/2}$ Byzantine 
faults. The algorithm finds the set of
tuples among the tuple-sets of each fusion state that are within Hamming distance
$\lfloor f/2 \rfloor$ of the primary tuple $r$ using the LSH tables and stores them in list $D$. It then constructs a vote vector $V$ for each unique tuple in this list. 
The votes for each tuple $g \in V$ is the number of times it appears in $D$
plus the number of primary states of $r$ that appear in $g$. 
The tuple with greater than or equal to $n+\lfloor f/2 \rfloor$ votes is the correct primary tuple.
When there is no such tuple, we need to exhaustively search each fusion state for points within
distance $\lfloor f/2 \rfloor$ of $r$ (LSH has not returned all of them). 
In Fig. \ref{figStEvRedLattice}, let the states of machines $A$, $B$, $C$ $F_1$
and $F_2$ are $a^0$, $b^1$, $c^0$, $f_1^0$ and $f_2^0$ respectively, with one liar among them
$(\lfloor f/2 \rfloor = 1)$. The tuples within Hamming
distance one of $r=a^0b^1c^0$ among $f_1^0=\{a^0b^0c^0, a^1b^0c^1, a^1b^1c^0,a^0b^1c^1\}$ and
$f_2^0=\{a^0b^0c^0, a^1b^1c^1\}$ are $\{a^0b^0c^0, a^1b^1c^0,a^0b^1c^1\}$ and $\{a^0b^0c^0\}$ respectively.
Here, tuple $a^0b^0c^0$ wins a vote each from $F_1$ and $F_2$ since $a^0b^0c^0$ is present in $f_1^0$ and
$f_2^0$.  It also wins a vote each from $A$ and $C$, since the current states of $A$ and $C$, $a^0$
and $c^0$, are present in $a^0b^0c^0$. 
The algorithm returns $a^0b^0c^0$ as the true primary tuple, since $n+\lfloor f/2 \rfloor= 3+1 =4$. 
We show in the following theorem that the true primary
tuple will always get sufficient votes.

\begin{theorem}Given a set of $n$ machines $\mathcal{P}$ and an \fusion{f}{f} $\mathcal{F}$
corresponding to it, the \emph{correctByz} algorithm corrects up to $\myfloor{f/2}$ Byzantine faults among them.  
\end{theorem}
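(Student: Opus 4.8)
The plan is to show that the correct primary tuple $r^c$ receives at least $n + \lfloor f/2 \rfloor$ votes, and that no incorrect tuple can reach that threshold, so the tuple returned by \emph{correctByz} is exactly $r^c$. I would set up the accounting exactly as in the proof of Theorem~\ref{thByzDet} and Theorem~\ref{thmCrashCorr}: express each primary state and each fusion state as a tuple-set of $\RCP$ states, and recall that among the $n+f$ machines in $\mathcal{P} \cup \mathcal{F}$ at most $\lfloor f/2 \rfloor$ are lying, so at least $n + f - \lfloor f/2 \rfloor \geq n + \lfloor f/2 \rfloor + 1$ machines are truthful.

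First I would argue that $r^c$ gets enough votes. Every truthful primary $P_i$ reports its true state, which (as a tuple-set) contains $r^c$; this contributes $1$ to $V[r^c]$ in the primary-voting loop. Every truthful fusion reports its true state $b$; since the fusion states partition the $\RCP$ states, the unique fusion state containing $r^c$ is truthful (if that fusion is not a liar), and $r^c$ lies within Hamming distance $\lfloor f/2 \rfloor$ of the reported primary tuple $r$ because $r$ differs from $r^c$ in at most $\lfloor f/2 \rfloor$ coordinates (only liars among the primaries cause a discrepancy). Hence $r^c$ appears in the set $S$ returned for that fusion and contributes $1$ to $V[r^c]$ via the fusion-voting count. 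Summing over all truthful machines, $V[r^c] \geq$ (number of truthful machines) $\geq n + \lfloor f/2 \rfloor$, so $r^c$ clears the threshold. (A small technical point: I must confirm $r^c \in G$, i.e.\ that LSH actually surfaces it; this is the w.h.p.\ caveat, or handled by the exhaustive-search fallback the algorithm already specifies.)

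Next I would show no wrong tuple $g \neq r^c$ can also reach $n + \lfloor f/2 \rfloor$ votes, which guarantees the returned tuple is unique and correct. Suppose $V[g] \geq n + \lfloor f/2 \rfloor$. A truthful primary $P_i$ contributes to $V[g]$ only if $g$ contains $P_i$'s true state, i.e.\ only if $g$ and $r^c$ agree in coordinate $i$; a truthful fusion contributes only if its true (partition) block contains $g$ and $g$ is within $\lfloor f/2 \rfloor$ of $r$. The core estimate is that the set of truthful machines separating $g$ from $r^c$ — those that do \emph{not} vote for $g$ — plus the liars, must be bounded: since $\mathcal{F}$ is an $(f,f)$-fusion, strictly more than $f$ machines in $\mathcal{P}\cup\mathcal{F}$ separate any two distinct $\RCP$-states that lie in distinct blocks, and I would use the same pigeonhole argument as in Theorem~\ref{thmCrashCorr} to bound how many machines can simultaneously vote for $g$. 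The cleanest route: if both $g$ and $r^c$ got $\geq n + \lfloor f/2\rfloor$ votes, then more than $2\lfloor f/2 \rfloor \geq$ (number of liars) machines would be double-counted as consistent with two tuples that some machine separates, forcing a contradiction with the $\w(\mathcal{P}\cup\mathcal{F}) > f$ bound exactly as in the crash-correction proof. Therefore $r^c$ is the unique tuple above threshold.

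The main obstacle I anticipate is the second part — ruling out a spurious winner — because the voting mixes two different notions (membership of a primary \emph{state} in a tuple, versus membership of a tuple within Hamming radius $\lfloor f/2\rfloor$ of $r$ in a fusion block), and one must carefully translate both into statements about which machines ``separate'' $g$ from $r^c$ in the fault graph before invoking $\w(\mathcal{P}\cup\mathcal{F}) > f$. Getting the inequality bookkeeping right (that the truthful non-voters for $g$ number at least $f+1 - \lfloor f/2\rfloor$, which exceeds the liar budget) is where the argument has to be precise; the rest follows the template of Theorems~\ref{thByzDet} and~\ref{thmCrashCorr}.
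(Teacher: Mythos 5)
Your proposal is correct and follows essentially the same route as the paper: show $r^c$ is contained in the reported tuple-sets of all (at least $n+f-\lfloor f/2\rfloor \geq n+\lfloor f/2\rfloor$) truthful machines, and bound the votes of any other tuple $g$ by the $\lfloor f/2\rfloor$ liars plus the fewer-than-$n$ machines that fail to separate $g$ from $r^c$, using $\w(\mathcal{P}\cup\mathcal{F})>f$. Your bookkeeping (at least $f+1-\lfloor f/2\rfloor$ truthful separators of $g$ and $r^c$ cannot vote for $g$) is exactly the inequality the paper uses, so the spurious-winner case is ruled out just as in the paper's proof.
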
 
\begin{proof}
We prove that the true primary tuple, $r^{c}$ will uniquely get  greater than or equal to
$(n+\myfloor{f/2})$ votes. Since
there are less than or equal to $\myfloor{f/2}$ liars, $r^{c}$ will be present in the 
tuple-sets of greater than or equal to $n+\myfloor{f/2}$ 
machines. Hence the number of votes to $r^{c}$, $V[r^{c}]$ is greater than or equal to
$(n+\myfloor{f/2})$. An incorrect
primary tuple $r^{w}$ can get votes from less than or equal to $\myfloor{f/2}$ machines (i.e, the liars) and the truthful machines
that contain both $r^{c}$ and $r^{w}$ in the same tuple-set. Since $\m{F}$ is an
\fusion{f}{f} of $\m{P}$, among all the $n+f$ machines, less than $n$ of them contain
$\{r^{c},r^{w}\}$ in the same tuple-set. Hence, the number of votes to $r^{w}$,
$V[r^{w}]$ is less than $(n +\myfloor{f/2})$ which is less than $V[r^{c}]$. 
\end{proof}

The space complexity analysis is similar to crash correction.
The time complexity to generate $D$, same as that for crash fault correction is $O(n \rho f)$ w.h.p. If we
maintain $G$ as a hash table (standard hash functions), to obtain votes from the fusions, we just
need to iterate through the $f$ sets in $D$, each containing $O(\rho)$ points of size $n$ each and
check for their presence in $G$ in constant time. Hence the time complexity to obtain votes from the
backups is $O(n \rho f)$. Since the size of $G$ is $O(\rho f)$, the time complexity to obtain
votes from the primaries is again $O(n \rho f)$, giving over all time complexity $O(n \rho f)$
w.h.p. In the case of replication, we just need to obtain the majority across $f$ copies of each
primary with time complexity $O(nf)$. The message complexity analysis is the same as Byzantine
detection, because correction can take place only after acquiring the state of all machines and
detecting the fault. 
 
\section{Practical use of Fusion in the MapReduce Framework} \label{secMapReduce}
To motivate the practical use of fusion, we discuss its
potential application to the MapReduce framework which is used  to model large scale
distributed computations. Typically, the  MapReduce framework is built using the master-worker configuration where
the master assigns the map and reduce tasks to various workers. While the map tasks perform the actual
computation on the data files received by it as $<$key, value$>$ pairs, the reducer tasks aggregate the
results according to the keys and writes it to the output file. 

\col{Note that, in batch processing
application for MapReduce, fault tolerance is based on passive replication. So, a task that failed
would simply be restarted on another worker node. However, our work is targetted towards applications
such as distributed stream processing, with strict deadlines. Here, active replication is often used for fault tolerance
\cite{Shah2004,balazinska2005fault}. Hence, tasks are replicated at the beginning of the
computation, to ensure that despite failures there are sufficient workers remaining.} 

In this
paper, we
focus on the \emph{distributed grep} application based on the MapReduce framework. Given a continuous
stream of data files, the grep application checks if every line of the file matches patterns defined by 
regular expressions (modeled as \FSMs{}). Specifically, we assume that the expressions are $((0 +
1)(0 + 1))$*, $((0+2)(0+2))$* and $(00)$* 
modeled by $A$, $B$, $C$ shown in Fig. \ref{figMainExample}. We show using a simple case study that the current replication
based solution requires 1.8 million map tasks while our solution that combines fusion with
replication requires only 1.4 million map tasks. This results in considerable savings in space and other
computational resources. 
%
\begin{figure}[htb] 
\centerline{ 
\scalebox{0.40}{ 
\includegraphics{./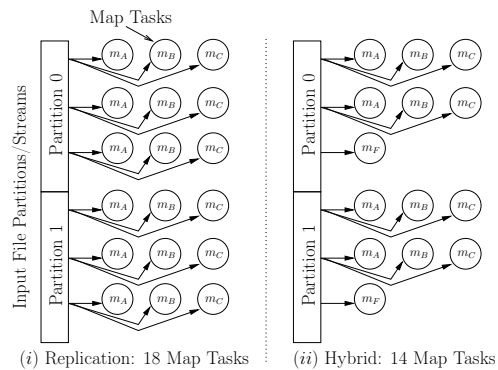} } } 
\caption{Replication vs. Fusion for \emph{grep} using the MapReduce framework.}
\label{figMapReduce} 
\end{figure} 
\subsection{Existing Replication-based Solution}

We first outline a simplified version of a pure replication based solution to correct two crash
faults in Fig. \ref{figMapReduce} $(i)$. Given an input file
stream, the
master splits the file into smaller partitions (or streams) and breaks these partitions into $<$file
name, file content$>$ tuples. For each partition, we maintain three primary map tasks $m_A$, $m_B$
and $m_C$ that output the lines that match the regular expressions modeled by $A$, $B$ and $C$ respectively. To
correct two crash faults, we
maintain two additional copies of each primary map task for every partition. The master sends tuples belonging to
each partition to the primaries and the copies. The reduce phase just
collects all lines from these map task and passes them to the user. Note that, the reducer receives inputs from the
primaries and its copies and simply discards duplicate inputs.
Hence, the copies help in both fault tolerance and load-balancing. 

When map tasks fail, the state
of the failed  tasks can be recovered from one of the remaining
copies. From Fig. \ref{figMapReduce} $(ii)$, it is clear that each file partition 
requires nine map tasks. In such systems, typically, the input files are large enough to be
partitioned into 200,000 partitions \cite{Dean2008}. Hence, replication requires 1.8 million
map tasks. 
\subsection{Hybrid Fusion-based Solution}

In this section, we outline an alternate solution based on a combination of replication and fusion, 
as shown in Fig. \ref{figMapReduce} $(ii)$. For each partition, we maintain just one additional copy of each primary and
also maintain one fused map task, denoted $m_F$ for the entire set of
primaries. The fused map task searches for the regular expression $(11)$* modeled by $F_1$ in Fig.
\ref{figMainExample}. Clearly, this solution can correct two crash faults among the primary
map tasks, identical to the replication-based solution. The reducer operation
remains identical. The
output of the fused map task is relevant only for fault tolerance and hence it does not send its
output to the reducer. Note that since there is only one additional copy of each primary, we compromise on
the load balancing as compared to pure replication. However, we require only seven map tasks as
compared to the nine map tasks required by pure replication. 

When only one fault occurs among the map tasks, the state of the failed map task 
can be recovered from the remaining copy with very little overhead. Similarly, if two faults occur \emph{across} the
primary map tasks, i.e., $m_A$ and $m_B$ fail, then their state can be recovered from the remaining copies. 
Only in the relatively rare event that two faults occur among the
copies of the same primary, that the fused map task has to be used for recovery.  For example, if both
copies of $m_A$ fail, then $m_F$ needs to acquire the state of $m_B$ and $m_C$ (any of the
copies) and perform the algorithm for crash correction in \ref{secCrashCorr} to recover the state of
$m_A$. Considering 200,000 partitions, the hybrid approach needs only 1.4 million map
tasks which is 22\% lesser map tasks than replication, even for this simple example. Note that as $n$
increases, the savings in the number of map tasks increases even further. This results in considerable
savings in terms of $(i)$ the state space required by these map tasks $(ii)$ resources such as the power consumed by
them. 
 
%


\section{Experimental Evaluation}\label{secEvaluate}

\begin{table}[ht]\centering
\caption{MCNC' 91 Benchmark Machines}
{\small
\begin{tabular}{|c|c|c|}
\hline
\emph{Machines} & \emph{States} & \emph{Events}\\
\hline
dk15 & 4 & 8  \\
\hline
bbara & 10 & 16  \\
\hline
mc & 4  & 8  \\
\hline 
lion & 4  & 4  \\
\hline
bbtas &  6  & 4   \\
\hline
tav & 4  & 16   \\
\hline
modulo12 & 12  & 2  \\
\hline
beecount & 7 & 8  \\
\hline
shiftreg & 8  & 2  \\
\hline
\end{tabular}\label{tabMCNC}
}
\end{table}

In this section, we evaluate fusion using the MCNC'91 benchmarks
\cite{Yang91logicsynthesis} for \FSMs{}, widely used for 
research in the fields of logic synthesis and finite state machine synthesis 
\cite{Mishchenko06dag,YouraIMF98}. In Table \ref{tabMCNC}, we specify the number of states and
number of events/inputs for the benchmark machines presented in our results.  We implemented an incremental version of the \emph{genFusion}
algorithm (Appendix \ref{secAppIncFusion}) in Java 1.6 and 
compared the performance of fusion with replication for 100 different combinations of the benchmark
machines, with $n=3$, $f=2$, $\Event=3$ and present some of the results in Table \ref{tabResults}. The 
implementation with detailed results are available in \cite{mapleFusionFSM}. 

Let the primaries be denoted $P_1$, $P_2$ and $P_3$ and the fused-backups $F_1$ and $F_2$. Column 1 
of Table \ref{tabResults} specifies the names of three primary \FSMs{}. Column 2 specifies the backup space
required for replication ($\prod_{i=1}^{1=3}|P_i|^f$)
, column 3 specifies the
backup space for fusion $(\prod_{i=1}^{i=2}|F_i|$) and column 4 specifies the
percentage state space savings ((column 2-column 3)* 100/column 2). Column 5 specifies the total number of
primary events, column 6 specifies the average number of events across $F_1$ and $F_2$ and the last column
specifies the percentage reduction in events ((column 5-column 6)*100/column 5). 

\col{For example, consider the first row of Table \ref{tabResults}. The primary machines are the ones
named dk15, bbara and mc. Since the machines have 4, 10 and 4 states respectively (Table
\ref{tabMCNC}), the replication
state space for $f =2$, is the state space for two additional copies of each of these machines, which is
$(4* 10*4)^2$ = 25600. The two fusion machines generated for this set of primary machines each had
140 states and hence, the total state space for fusion as a solution is 19600. For the benchmark machines, the events are
binary inputs. For example, as seen in Table \ref{tabMCNC}, dk15 contains eight events. Hence, the event set of dk15 = $\{0,1, \ldots,7\}$. The event sets of the primaries is
the union of the event set of each primary. So, for the first row of Table
\ref{tabResults}, the primary event set is $\{0,1,\ldots 15\}$. In this example, both fusion
machines had 10 events and hence, the average number of fusion events is 10.}

\begin{table*}[ht]\centering
\caption{Evaluation of Fusion on the MCNC'91 Benchmarks}
{\small
\begin{tabular}{|c|p{0.70in}|p{0.70in}|p{0.70in}||p{0.50in}|p{0.45in}|p{0.70in}|}
\hline
\emph{Machines}& \emph{Replication State Space}& \emph{Fusion State Space} & \emph{ \%
Savings State Space} & \emph{Primary Events} &
\emph{Fusion Events}&  \emph{\% Reduction Events} \\
\hline
dk15, bbara,  mc	&25600	&19600	&23.44	&16		&10  &         37.5\\
\hline
lion, bbtas, mc	 	&9216	&8464	&8.16	&8		&7   &        12.5\\
\hline
lion, tav,  modulo12	&36864	&9216	&75	&16		&16  &   	 0\\
\hline
lion, bbara, mc	 	&25600	&25600	&0	&16		&9   &   	43.75\\
\hline
tav, beecount, lion	&12544	&10816	&13.78	&16		&16  &        0\\
\hline
mc, bbtas,  shiftreg	&36864	&26896	&27.04	&8		&7   &        12.5\\
\hline
tav, bbara, mc	 	&25600	&25600	&0	&16		&16  &   	0\\
\hline
dk15, modulo12, mc	&36864	&28224	&23.44	&8		&8   &   	0\\
\hline
modulo12, lion, mc	&36864	&36864	&0	&8		&7   &   	12.5\\
\hline
\end{tabular}\label{tabResults}

}
	\end{table*}
The average state space savings in fusion (over replication) is 38\% (with range 0-99\%) over the 100
combination of benchmark machines, while the average event-reduction is 4\% (with range 0-45\%). We also
present results in \cite{mapleFusionFSM} that show that the average savings in time by the incremental approach for generating the fusions (over
the non-incremental approach) is 8\%. 
Hence, fusion achieves significant
savings in space for standard benchmarks, while the event-reduction indicates that for many
cases, the backups will not contain a large number of events. 

\section{Discussion: Backups Outside the Closed Partition Set}\label{secMacOutLat} 

So far in this paper, we have only considered machines that belong to the closed partition set.
In other words, given a set of primaries $\mathcal{P}$, our search for backup machines was
restricted to those that are less than the $RCP$ of $\m{P}$, denoted by $R$. However, it is possible
that efficient backup machines exist \emph{outside} the lattice, i.e., among machines that are not less
than or equal to $R$. In this section, we present a
technique to detect if a machine outside the closed partition set of $R$ can correct faults among the primaries. 
Given a set of machines in $\mathcal{F}$ each less than or equal to $R$, we can determine if
$\m{P \cup F}$ can correct faults based on the $\w$ of $\m{P\cup F}$ (section
\ref{secGraphHamming}). To find $\w$, we first determine the mapping between the states of 
$R$ to the states of each of the machines in $\m{F}$. However, given a set of
machines in $\m{G}$ that are not less than or equal to $R$, how do we generate this mapping?

%

To determine the mapping between the states of $R$ to the states of the machines in $\m{G}$, we first generate the $\RCP$ of ${\{R\} \cup \m{G}}$, denoted $B$, 
 which is be greater than all the machines in ${\{R\} \cup \m{G}}$.  Hence, we can determine the
mapping between the states of $B$ and the states of all the machines in ${\{R\} \cup \m{G}}$. Given
this mapping, we can determine the (non-unique) mapping between the states of $R$ and the states
of the machines in $\m{G}$. This enables us to determine $\w(R, {\{R\} \cup \m{G}})$. If this $\w$ is 
greater than $f$, then $\m{G}$ can correct  $f$
crash or $\lfloor f/2 \rfloor$ Byzantine faults among the machines in $\m{P}$.    

\begin{figure}[htb] 
\begin{minipage}[b]{1.0\linewidth} 
 \centerline{
\scalebox{0.50}{
 \includegraphics{.//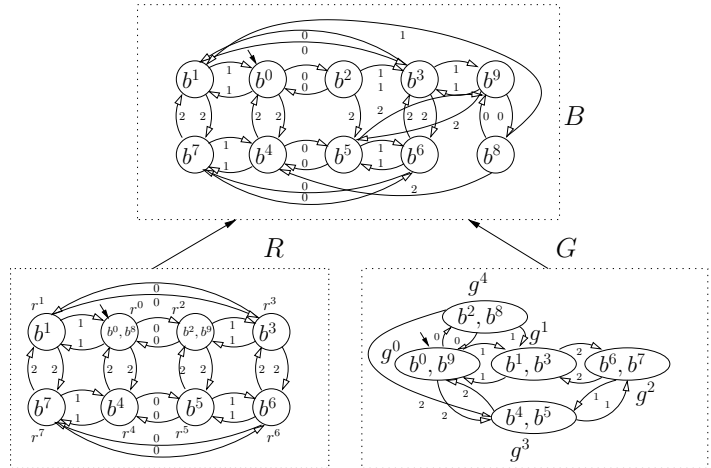}
} 
}
\end{minipage}  
\caption[ ]{Machine outside the closed partition set of $R$ in Fig. \ref{figStEvRedLattice}.}
\label{figExMacOutLat}
\end{figure}

Consider the example shown in Fig. \ref{figExMacOutLat}. Given the set of primaries $\{A,B,C\}$ shown
in Fig. \ref{figMainExample}, we want to determine if $G$ can correct one crash fault
among $\{A,B,C\}$. Since $G$ is outside the closed partition set of  $R$, we first construct $B$, which is the $\RCP$ of $G$ and $R$. 
Since $B$ is greater than both $R$ and $G$, we can determine how its
states are mapped to the states of $R$ and $G$ (similar to Fig. \ref{figStEvRedLattice}). For
example, $b^0$ and $b^8$ are mapped to $r^0$ in $R$, while $b^0$ and $b^9$ are mapped to
$g^0$ in $G$. Using this information, we can determine the mapping between the states of $R$ and $G$.
For example, since $b^0$ and $b^9$ are mapped to $r^0$ and  $r^2$ respectively, $g^0=\{r^0, r^2\}$.
Extending this idea, we get:
$$ g^1=\{r^1, r^3\}; g^2=\{r^6, r^7\};g^3=\{r^4, r^5\}; g^4=\{r^0, r^2\} $$   

In Fig. \ref{figFaultGraph} $(ii)$, the weakest edges of $G(\{A,B,C\})$ are $(r^0,r^1)$ and
$(r^2,r^3)$ (the other weakest edges not shown). Since $G$ separates all these edges, it can correct one crash fault among the
machines in $\{A,B,C\}$. However, note that, the machines in $\{A,B,C\}$ cannot correct a fault in $G$. For
example, if $G$ crashes and $R$ is in state $r^0$, we cannot determine if $G$ was in
state $g^0$ or $g^4$. This is clearly different from the case of the fusion machines presented in
this paper, where faults could be corrected among \emph{both} primaries and backups.
 
%

\section{Related Work}\label{secRelatedWork}

Our work in \cite{BharOgaleGargFusion07} introduces the concept of the fusion of \FSMs{}, and
presents an algorithm to generate a backup to correct one crash fault among a given set of
machines. This paper  is based on our work in \cite{OgaBhar09,bharGargFsmOpodis2011}. The work presented in 
\cite{GarOgaFusibleDS,balaGargFusData2011,GarBeyondReplication2010} explores fault
tolerance in distributed systems  with programs hosting large data structures. The key idea there is
to use erasure/error correcting codes \cite{BerleCoding68} to reduce the space overhead of
replication. Even in this paper, we exploit the similarity between fault tolerance in \FSMs{} and fault tolerance in a
block of bits using erasure codes in section \ref{secGraphHamming}. However, there is one important difference between erasure codes involving bits
and the \FSM{} problem.  In erasure codes, the value of the redundant bits depend on the data bits.
In the case of \FSMs{}, it is not feasible to transmit the state of all the machines after each
event transition to calculate the state of the backup machines. Further, recovery in such an
approach is costly due to the cost of decoding. In our solution, the backup machines 
act on the same inputs as the original machines and independently transition to
suitable states. Extensive work has been done \cite{HuffSynth,HopTechReport71} on the minimization of completely
specified \FSMs{}, but the minimized machines are equivalent to the original machines. In our
approach, we reduce the $\RCP$ to generate efficient backup machines that are lesser than the $\RCP$. Finally, since we assume a trusted recovery agent, the work on
consensus in the presence of Byzantine faults \cite{LaSh82,PeaseLamp80}, does not apply to our
paper.

%

\section{Conclusion}\label{secConc}
We present a fusion-based solution to correct $f$ crash or $\lfloor f/2 \rfloor$ Byzantine faults
among $n$ \FSMs{} using just $f$ backups as compared to the traditional approach of replication
that requires $nf$ backups.
In table \ref{tableComparison}, we summarize our results and compare the various
parameters for replication and
fusion. In this paper, we present a framework to understand fault
tolerance in machines and provide an algorithm that generates backups that are optimized for states
as well as events. Further, we present algorithms for detection and the correction of faults with
minimal overhead over replication. 

Our evaluation of fusion over standard benchmarks shows that efficient backups exist for many
examples. To illustrate the practical use of
fusion, we describe a fusion-based design of a distributed application in the MapReduce framework. While the current replication-based solution may require 1.8 million map tasks, a fusion-based solution
requires just 1.4 million map tasks with minimal overhead in terms of time as compared to replication.
This can result in considerable savings in space and other computational resources such as power.

In the future, we wish to implement the design presented in section \ref{secMapReduce} using the
Hadoop framework \cite{hadoop} and 
compare the end-to-end performance of replication and our fusion-based solution. In particular we wish to focus on the
space incurred by both solutions, the time and computation power taken for a set of tasks to complete with and without
faults. Further, we wish to explore the
existence of efficient backups if we allow information exchange among the primaries. Finally, we
wish to design efficient algorithms to generate backups both inside and outside the closed partition
set of the $\RCP$.

\bibliographystyle{plain}
\bibliography{refs_ipdps09,refs_disc10} 
\clearpage
\appendix 

\setcounter{theorem}{0}

\begin{figure*}[htb]
\centerline{
\scalebox{0.5}{
 \includegraphics{./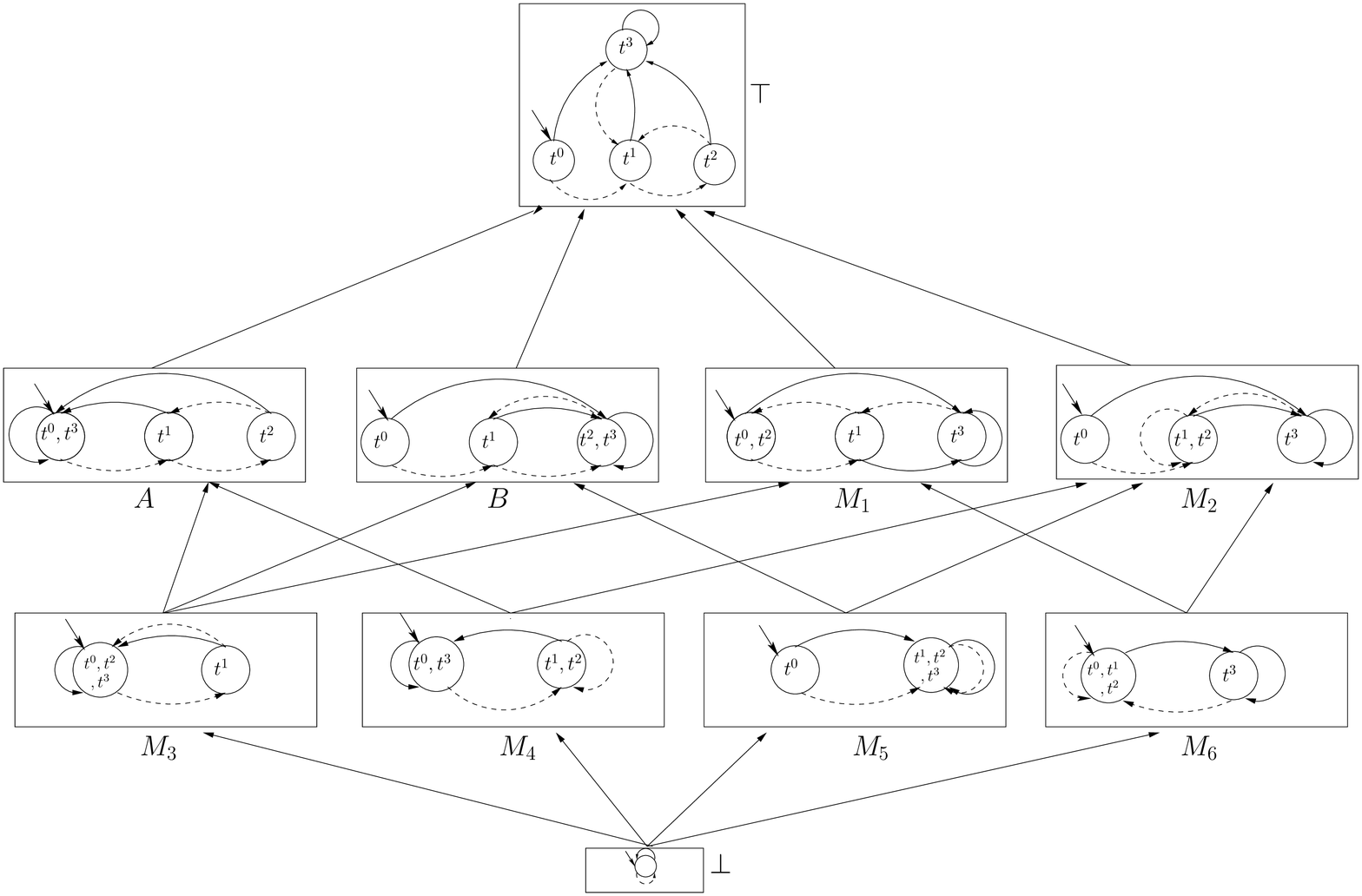}
} 
}
\caption{Closed partition set for the $\RCP$ of $\{A,B\}$.}
\label{figOldClosedPartitionLattice}
\end{figure*}

\begin{figure*}[!htb]
\centerline{
\scalebox{0.4}{
 \includegraphics{./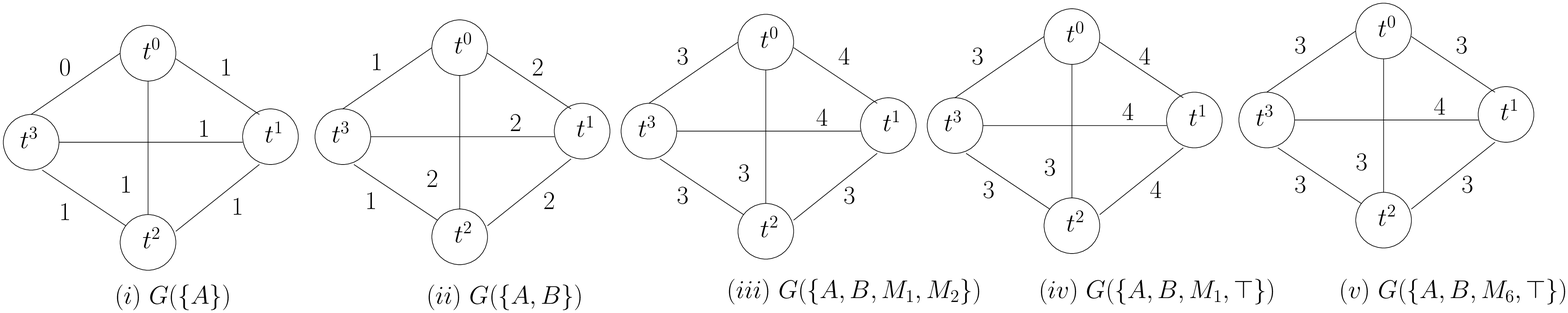}
} 
}
\caption{Fault Graphs for sets of machines shown in Fig. \ref{figOldClosedPartitionLattice}.}
\label{figOldFaultGraph}
\end{figure*}

\section{Event-Based Decomposition of Machines}\label{secAppEvReduction}

\begin{figure}[ht]
\begin{minipage}[b]{0.8\linewidth}
\centering
\includegraphics[scale=0.40]{./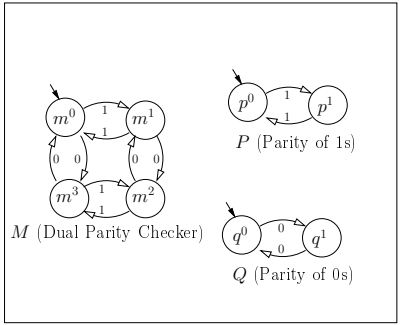}
\caption{Event-based decomposition of a machine.}
\label{figExampleEvReduce}
\end{minipage}
\end{figure}

In this section, we ask a question that is fundamental to the understanding of \FSMs{}, independent of
fault-tolerance: Given a machine $M$, can it be \emph{replaced} by two or more machines executing in
parallel, each containing fewer
events than $M$? In other words, given the state of these fewer-event machines, can we uniquely
determine the state of $M$? In Fig. \ref{figExampleEvReduce}, the 2-event machine $M$ (it  contains events 0 and
1 in its event set), checks for the parity of 0s \emph{and} 1s. $M$ can be replaced by two 1-event machines $P$ and $Q$,
that check for the parity of just 1s or 0s respectively. Given the state
of $P$ and $Q$, we can determine the state of $M$. 
In this section, we explore the problem of replacing a given machine $M$ with two or more machines,
each containing fewer events than $M$. We present an algorithm to generate such event-reduced
machines with time complexity polynomial in the size of $M$. This is 
important for applications with limits on the number of events each
individual process running a \FSM{} can service. We first define the notion of event-based decomposition. 

\begin{definition}
A \emph{\decomp{k}{\e}} of a machine $M$ $(X_M$, $\alpha_M$, $\Sigma_M$, $m^0)$ is a set of $k$
machines $\mathcal{E}$, each less than $M$,
such that $\w(M,\mathcal{E})>0$ and $\forall P (X_P,\alpha_P,\Sigma_P,p^0)\in \mathcal{E}$,
$|\Sigma_P| \leq |\Sigma_M|-\e$. 
\end{definition}

As $\w(M,\mathcal{E})>0$, given the state of the machines in $\mathcal{E}$,
the state of $M$ can be determined. So,
the machines in $\mathcal{E}$, each containing at most $|\Sigma_M|-e$ events, can effectively replace $M$. 
In Fig. \ref{figEventReduce}, we present the \emph{eventDecompose} algorithm that takes as input,
machine $M$, parameter $\e$, and returns a \decomp{$k$}{$\e$} of $M$ (if it exists) for some $k \leq
|X_M|^2$. 

In each iteration, Loop 1 generates machines that contain at least one event less than the
machines of the previous iteration. So, starting with $M$ in the first iteration, at the end of
$e$ iterations, $\mathcal{M}$ contains the set of largest machines less than $M$, each containing at most $|\Sigma_M|-\e$ events. 

Loop 2, iterates through each machine $P$ generated in the previous iteration, and uses the
\emph{reduceEvent} algorithm (same as the algorithm presented in Fig. \ref{figFusionAlgo}) to generate the set of largest machines less than $P$ containing at least one
event less than $\Sigma_P$. To generate a machine less than $P$,
that does not contain an event $\sigma$ in its event set, the \emph{reduceEvent} algorithm
combines the states such that they loop
onto themselves on $\sigma$. The algorithm then constructs the largest machine that contains these states in the combined form. This
machine, in effect, ignores $\sigma$. This procedure is repeated for all events in $\Sigma_P$ and the
largest incomparable machines among them are returned. 
Loop 3 constructs an event-decomposition $\mathcal{E}$ of $M$, by iteratively adding at least one machine from
$\mathcal{M}$ to separate each pair of states in $M$, thereby ensuring that $\w(\mathcal{E})> 0$. 
Since each machine added to $\mathcal{E}$ can separate more than one pair of states, an
efficient way to implement Loop 3 is to check for the pairs that still need to be separated in
each iteration and add machines till no pair remains. 

\begin{figure*}[htb]  
\begin{minipage}[b]{0.5\linewidth} 
 \centerline{
\scalebox{0.38}{
 \includegraphics{./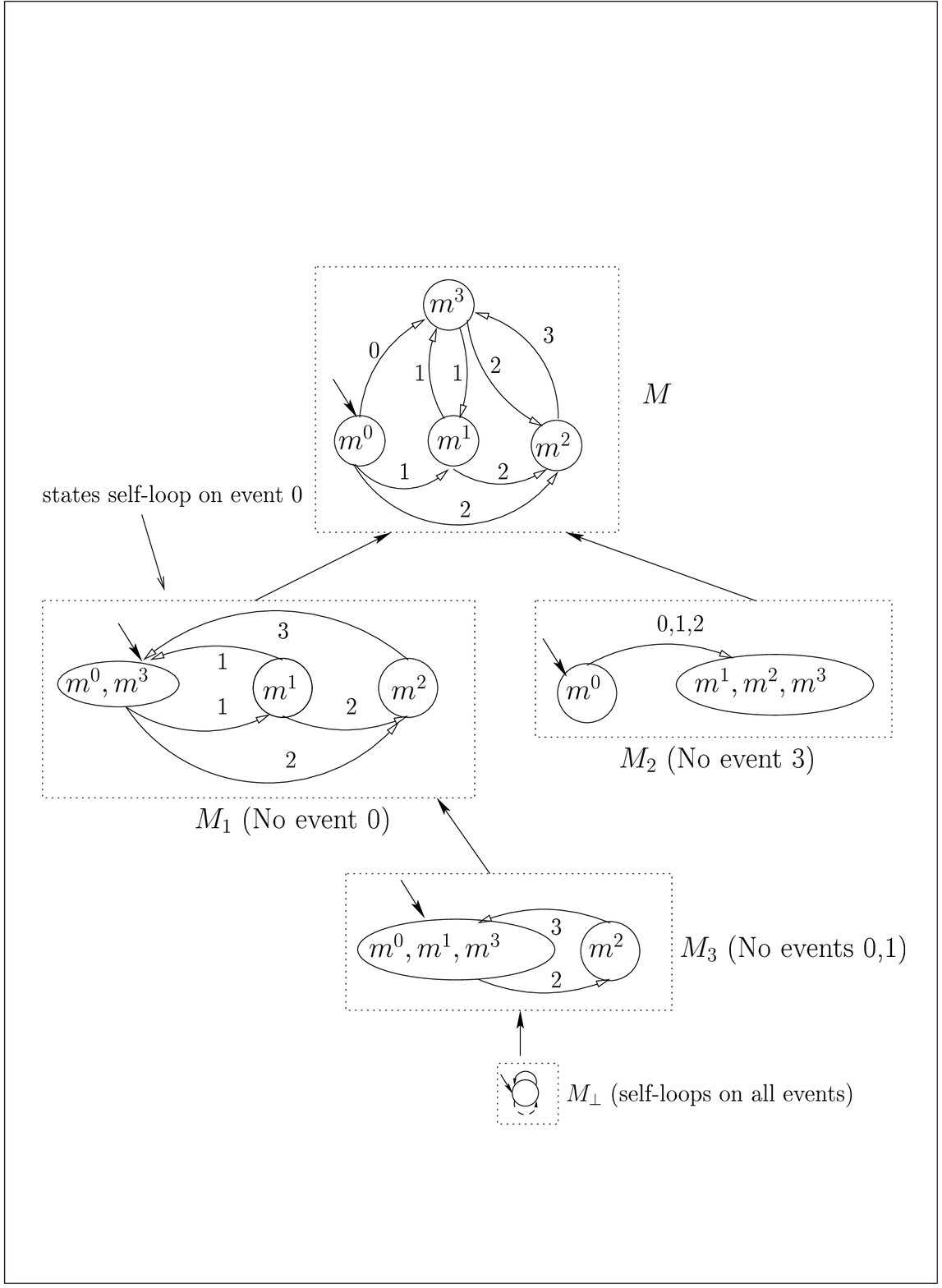}
} 
}
\end{minipage}  
\hspace{0.1cm} 
\fbox{\begin{minipage}[b]  {0.4\linewidth}
{\small
\emph{eventDecompose}\\
    \h {\bf Input}: Machine $M$ with state set $X_M$, event set $\Sigma_M$\\
		\h and transition function $\alpha_M$;\\ 
    \h {\bf Output}: \decomp{$k$}{$\e$} of $M$ for \\
    \h some $k \leq |X_M|^2$;\\
     \h $\mathcal{M} = \{M\}$; \\
     \h{\bf for} {($j=1$ to $\e$)} //Loop 1\\
	\h\h 	$\mathcal{G} \leftarrow \{\}$;\\
	\h\h {\bf for} {$(P \in \mathcal{M})$} //Loop 2\\
		\h\h\h $\mathcal{G}= \mathcal{G} \cup \emph{reduceEvent}(P)$;\\
	\h\h $\mathcal{M}$ = $\mathcal{G}$; \\
    \h $\mathcal{E} \leftarrow \{\}$;\\
    \h {\bf for} $(m_i,m_j \in X_M)$ //Loop 3\\
	\h\h {\bf if} ($\exists E \in \m{M}: E$ separates $m_i,m_j$)\\
	\h\h\h $\mathcal{E} \leftarrow \mathcal{E} \cup \{E\}$;\\
	\h\h{\bf else}\\     
	\h\h\h{\bf return} $\{\}$;\\
\h {\bf return} $\mathcal{E}$;\\

\emph{reduceEvent}\\
    \h {\bf Input}: Machine $P$ with state set $X_P$, event set $\Sigma_P$\\
		\h and transition function $\alpha_P$;\\ 
    \h {\bf Output}: Largest Machines $< P$ with $\leq |\Sigma_P|-1$ events;\\
    \h $\mathcal{B} = \{\}$; \\
    \h {\bf for} $(\sigma \in \Sigma_P)$\\ 
	\h\h Set of states, $X_B = X_P$;\\ 
	\h\h //combine states to self-loop on $\sigma$\\
	\h\h {\bf for} ($s \in X_B$)\\		
		\h\h\h $s = s \cup \alpha_P(s,\sigma)$;\\
	\h\h $\mathcal{B} = \mathcal{B} \cup \{$Largest machine consistent with $X_B\}$; \\
    \h {\bf return} largest incomparable machines in $\mathcal{B}$;

}
\end{minipage}
} 
\caption[ ]{Algorithm for the event-based decomposition of a machine.}
\label{figEventReduce}
\end{figure*}

Let  the 4-event machine $M$ shown in Fig. \ref{figEventReduce} be the input to the \emph{eventDecompose} algorithm with $\e=1$. 
In the first and only iteration of Loop 1, $P=M$ and the \emph{reduceEvent} algorithm
generates the set of largest 3-event machines less than $M$, by successively eliminating each event. 
To eliminate event 0, since $m^0$ transitions to $m^3$ on event $0$, these two states are combined.
This is repeated for all states and the largest machine containing all the combined states self
looping on event 0 is $M_1$. Similarly, the largest machines not acting on events 3,1 and 2 are $M_2$,
$M_3$ and $M_\bot$ respectively. The \emph{reduceEvent} algorithm returns $M_1$ and $M_2$ as the
only largest incomparable machines in this set. The
\emph{eventDecompose} algorithm returns $\mathcal{E}=\{M_1$, $M_2\}$, since each
pair of states in $M$ are separated by $M_1$ or $M_2$. Hence, the 4-event $M$ can be replaced by the 3-event 
$M_1$ and $M_2$, i.e., $\mathcal{E}=\{M_1,M_2\}$ is a \decomp{2}{1} of $M$. 

\begin{theorem}
Given machine $M$ $(X_M,\alpha_M,\Sigma_M,m^0)$, the \emph{eventDecompose} algorithm generates a
\decomp{$k$}{$\e$} of $M$ (if it exists) for some  $k \leq |X_M|^2$. 
\end{theorem}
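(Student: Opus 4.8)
The plan is to verify three claims about \emph{eventDecompose}: (i)~the size bound $|\mathcal{E}|\le|X_M|^2$; (ii)~whenever the returned set $\mathcal{E}$ is nonempty it is a genuine \decomp{$k$}{$\e$} of $M$; and (iii)~the algorithm returns $\{\}$ only when no \decomp{$k$}{$\e$} of $M$ exists. Claim (i)~is immediate, since Loop~3 adds at most one machine to $\mathcal{E}$ for each unordered pair of distinct states of $M$, and there are fewer than $|X_M|^2$ of those. For (ii)~and (iii)~I would first record a monotonicity fact, immediate from the $\leq$-order of section~\ref{secOrder}: if $P\leq Q\leq M$ and $P$ separates a pair of states $(m^i,m^j)$ of $M$, then $Q$ separates $(m^i,m^j)$ too, because each block of $Q$ lies inside a block of $P$. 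Everything else hinges on pinning down which machines survive Loop~1.

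The key lemma concerns \emph{reduceEvent}: for any machine $P\leq M$, every machine $Q\leq P$ with $|\Sigma_Q|\leq|\Sigma_P|-1$ satisfies $Q\leq E$ for some $E$ returned by \emph{reduceEvent}$(P)$. To see this, note that a machine $Q\leq P$ has $|\Sigma_Q|<|\Sigma_P|$ exactly when some event $\sigma\in\Sigma_P$ maps every block of $Q$ to itself, so that $\sigma$ can be dropped from $Q$'s event set; for that $\sigma$, each state $s$ lies in the same block of $Q$ as $\alpha_P(s,\sigma)$, hence the partition of $Q$ is a coarsening of the closure partition that \emph{reduceEvent} builds from $\sigma$, i.e.\ $Q\leq E_\sigma$, and since \emph{reduceEvent} returns the largest incomparable machines we get $Q\leq E_\sigma\leq E$ for one of them. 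Running this lemma through the $\e$ passes of Loop~1 by induction on the pass number --- a machine $Q\leq M$ with at most $|\Sigma_M|-j-1$ events also has at most $|\Sigma_M|-j$ events, so by the inductive hypothesis it lies below some $P$ present after pass $j$, and then $|\Sigma_Q|\leq|\Sigma_P|-1$ puts it below a machine produced at pass $j+1$ --- yields the characterisation: after Loop~1, every member of $\mathcal{M}$ is a machine $\leq M$ with at most $|\Sigma_M|-\e$ events, and conversely every machine $Q\leq M$ with $|\Sigma_Q|\leq|\Sigma_M|-\e$ satisfies $Q\leq E$ for some $E\in\mathcal{M}$.

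With this characterisation, (iii)~follows by contradiction: if Loop~3 fails on a pair $(m_i,m_j)$ yet some \decomp{$k$}{$\e$} $\mathcal{E}'$ of $M$ exists, then $\w(M,\mathcal{E}')>0$ forces some $P'\in\mathcal{E}'$ to separate $(m_i,m_j)$; since $P'\leq M$ with $|\Sigma_{P'}|\leq|\Sigma_M|-\e$, we get $P'\leq E$ for some $E\in\mathcal{M}$, and monotonicity makes $E$ separate $(m_i,m_j)$, contradicting the failure. Hence whenever a decomposition exists Loop~3 never fails, and the $\mathcal{E}$ it accumulates separates every pair of states of $M$, so $\w(M,\mathcal{E})>0$; each member of $\mathcal{E}\subseteq\mathcal{M}$ is strictly below $M$ (it has fewer than $|\Sigma_M|$ events when $\e\geq1$) with at most $|\Sigma_M|-\e$ events, so $\mathcal{E}$ is a \decomp{$k$}{$\e$} of $M$ with $k\leq|X_M|^2$, giving (ii). I expect the main obstacle to be precisely the \emph{reduceEvent} lemma together with the Loop~1 induction: establishing that the finitely many ``self-loop closure'' machines the algorithm retains dominate, under $\leq$, every event-reduced submachine of $M$. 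Once that is in place, the rest is routine bookkeeping with the monotonicity fact.
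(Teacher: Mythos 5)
Your proof takes the same route as the paper's --- bound $k$ by the number of state pairs handled in Loop~3, argue that \emph{reduceEvent} exhaustively dominates all event-reduced machines, and transfer separation upward via the $\leq$-order --- but you are considerably more explicit: the paper merely asserts that \emph{reduceEvent} ``exhaustively generates the largest incomparable machines that ignore at least one event,'' whereas you actually prove the single-pass domination lemma (every $Q\leq P$ with a droppable event $\sigma$ sits below the self-loop closure $E_\sigma$, hence below some returned maximal machine) and you make the monotonicity fact and the contradiction for completeness explicit. That is all correct and is genuinely more rigorous than the published argument.

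The one step I would push back on is the multi-pass induction through Loop~1. From the inductive hypothesis you get $Q\leq P$ for some $P\in\mathcal{M}$ after pass $j$, and you then invoke the lemma via ``$|\Sigma_Q|\leq|\Sigma_P|-1$.'' But that inequality only follows when $P$ has exactly $|\Sigma_M|-j$ effective events; a machine produced at pass $j$ may already have shed more than $j$ events (forcing one event to self-loop can force others to self-loop as well), in which case $P$ could satisfy $|\Sigma_P|=|\Sigma_Q|$, the lemma's hypothesis fails, and --- since $Q\leq E$ forces $\Sigma_Q\subseteq\Sigma_E$ --- nothing in \emph{reduceEvent}$(P)$ can dominate $Q$. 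One must then argue that $Q$ is still dominated via some other chain in $\mathcal{M}$, or that $P$ itself (which already has at most $|\Sigma_M|-\e$ events and, by your monotonicity fact, separates everything $Q$ separates) can be retained; the algorithm as written discards $P$ at the next pass, so this needs an explicit argument. To be fair, the paper's own proof is entirely silent on this boundary case, so your write-up exposes rather than creates the difficulty; but as stated the induction step is not yet airtight.
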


\begin{proof} The \emph{reduceEvent} algorithm 
exhaustively generates the largest incomparable machines that ignore at least one event in $\Sigma_M$. After
$e$ such reduction in events, Loop 3 selects one machine (if it exists) among $\mathcal{M}$ to
separate each pair of states in $X_M$. This ensures that at the end of Loop 3, either
$\w(\mathcal{E})>0$ or the algorithm has returned $\{\}$ (no \decomp{$k$}{$e$} exists). Since
there are at most $|X_M|^2$ pairs of states in $X_M$, there are at most $|X_M|^2$ iterations of
Loop 3, in which we pick one machine per iteration. Hence, $k \leq |X_M|^2$. 
\end{proof}

The \emph{reduceEvent} algorithm visits each state of machine $M$ to create blocks of states
which loop to the same block on event $\sigma \in \Sigma_M$. This has time complexity $O(|X_M|)$ per
event. The cost of
generating the largest closed partition corresponding to this block is $O(|X_M| |\Sigma_M|)$ per
event. Since
we need to do this for all events in $\Sigma_M$, the time complexity to reduce at least one event is
$O(|X_M| |\Sigma_M|^2)$. In the \emph{eventDecompose} algorithm, the first iteration 
generates at most $|\Sigma_M|$ machines, the second iteration at most $|\Sigma_M|^2$ machines and the $\e^{th}$
iteration will contain $O(|\Sigma_M|^\e)$ machines. The rest of the analysis is similar to the one
presented in section \ref{secFsmTc} and the time complexity of the \emph{reduceEvent} algorithm is $O(|X_M|
|\Sigma_M|^{\e+1})$.

%
To generate the \decomp{$k$}{$\e$} from the set of machines in
$\mathcal{M}$, we find a machine in $\mathcal{M}$ to separate each pair of states in $X_M$.
Since there are $O(|X_M|^2)$ such pairs, the number of iterations of Loop 3 is $O(|X_M|^2)$. In each
iteration of Loop 3, we find a machine among the $O(|\Sigma_M|^{\e})$ machines of $\mathcal{M}$ that
separates a pair $m_i,m_j \in X_M$. To check if a machine separates a pair of states just takes
$O(|X_M|)$ time. Hence the time complexity of Loop 3 is $O(|X_M|^3|\Sigma_M|^{\e})$. So, the overall
time complexity of the \emph{eventDecompose} algorithm is the sum of the time complexities of Loop 1
and 3, which is $O(|X_M||\Sigma_M|^{\e+1}+|X_M|^3|\Sigma|^{\e})$.  

\section{Incremental Approach to Generate Fusions} \label{secAppIncFusion}

\begin{figure} \begin{center}
\fbox{
\begin{minipage}[b]  {0.90\linewidth}
{\small
\emph{incFusion}\\
\h{\bf Input}: Primaries $\mathcal{P} =\{P_1, P_2, \ldots P_n\}$, faults $f$, \\ 
\h state-reduction parameter $\State$, event-reduction parameter $\Event$;\\
\h{\bf Output}: \fusion{f}{f} of $\mathcal{P}$; \\
\h$\mathcal{F} \leftarrow \{P_1\}$; \\
    \h {\bf for} ($i=2$ to $n$) \\ 
	\h\h $\mathcal{N} \leftarrow \{P_i\} \cup \RCP(\mathcal{F})$;\\
	\h\h $\mathcal{F} \leftarrow \emph{genFusion}(\mathcal{N},f, \State,\Event)$;\\
    \h {\bf return} $\mathcal{F}$;
}\end{minipage}
}
\end{center}
\caption{Incremental fusion algorithm.}
\label{figIncFusionAlgo}
\end{figure}

In Fig. \ref{figIncFusionAlgo}, we present an incremental approach to generate the fusions, referred
to as the \emph{incFusion} algorithm, in
which we may never have to reduce the $\RCP$ of all the primaries. In each iteration, we generate the
fusion corresponding to a new primary and the $\RCP$ of the (possibly small) fusions  generated for the set of primaries
in the previous iteration. 

\begin{figure*}[htb] 
\centerline{ 
\scalebox{0.50}{ 
\includegraphics{./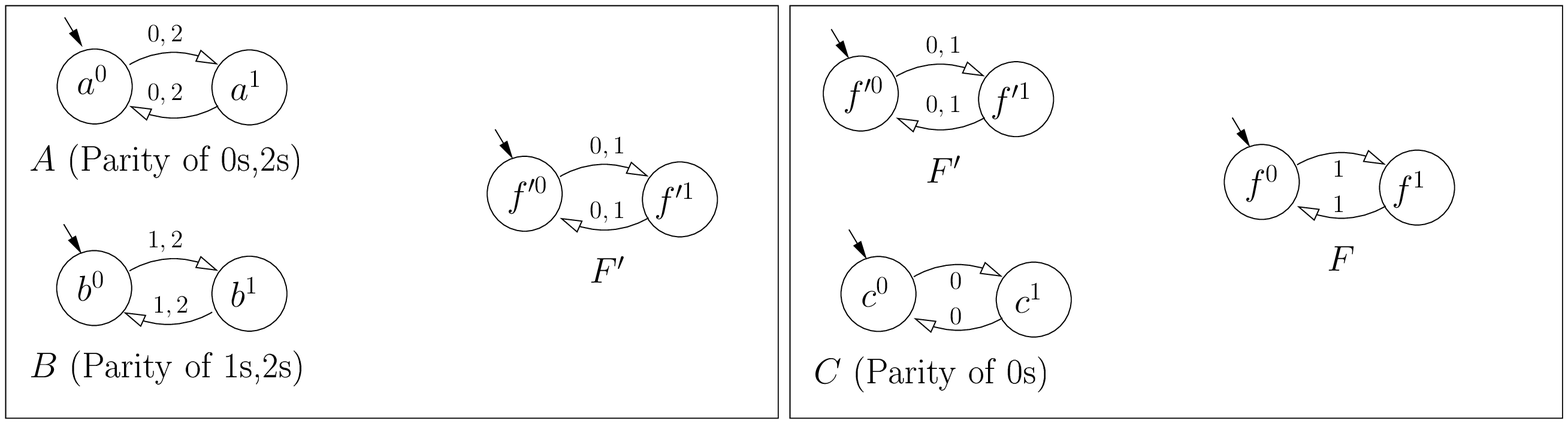} } } 
\caption{Incremental Approach: first generate $F'$ and then $F$.}
\label{figIncremental} \end{figure*}

In Fig. \ref{figIncremental}, rather than generate a fusion by
reducing the 8-state $\RCP$ of $\{A,B,C\}$, we can reduce the 4-state $\RCP$ of $\{A,B\}$ to
generate fusion $F'$ and then reduce the 4-state $\RCP$ of
$\{C,F'\}$ to generate fusion $F$. In the following paragraph, we present the proof of correctness for
the incremental approach and show that it has
time complexity $O(\rho^n)$ times better than that of the \emph{genFusion} algorithm, where $\rho$
is the average state reduction achieved by fusion. 

\begin{theorem}
Given a set of $n$ machines $\mathcal{P}$, the
\emph{incFusion} algorithm generates an \fusion{f}{f} of $\mathcal{P}$.
\end{theorem}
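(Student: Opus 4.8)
The plan is to argue by induction on the number of primaries that have been folded in, carrying a mildly strengthened invariant. For $2\le i\le n$, let $\mathcal{F}_i$ be the value of $\mathcal{F}$ at the end of the loop iteration with index $i$ (so $\mathcal{F}_1=\{P_1\}$ is the initialization and $\mathcal{F}_n$ is the returned set), put $R_i=\RCP(\{P_1,\dots,P_i\})$, and let $Q_{i-1}=\RCP(\mathcal{F}_{i-1})$ be the machine that, together with $P_i$, is fed to \emph{genFusion} at step $i$. The invariant I would establish is $H_i$: every machine in $\mathcal{F}_i$ is $\le R_i$, $|\mathcal{F}_i|=f$, and $\{P_1,\dots,P_i\}\cup\mathcal{F}_i$ can correct $f$ crash faults, equivalently (by Theorem~\ref{thFaultTolerance}) $\w(R_i,\{P_1,\dots,P_i\}\cup\mathcal{F}_i)>f$, equivalently $\mathcal{F}_i$ is an \fusion{f}{f} of $\{P_1,\dots,P_i\}$. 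Since $\RCP(\{P_1\})=P_1$, the $i=2$ iteration is exactly $\emph{genFusion}(\{P_1,P_2\},f,\State,\Event)$, so $H_2$ is the Correctness clause of Theorem~\ref{thEventState}; this is the base case (for $n=1$ the statement is degenerate, matching the model's standing assumption that $\mathcal{P}$ cannot already correct one fault).

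For the inductive step assume $H_{i-1}$ and consider step $i$. The size and order parts of $H_i$ are routine: \emph{genFusion} adds one machine in each of its $f$ outer iterations, and it returns machines $\le\RCP(\{P_i,Q_{i-1}\})$; since each member of $\mathcal{F}_{i-1}$ is $\le R_{i-1}$ by $H_{i-1}$, the reachable cross product $Q_{i-1}$ is $\le R_{i-1}\le R_i$, and the reachable cross product of machines all below a common machine stays below it, so $\RCP(\{P_i,Q_{i-1}\})\le R_i$ (event sets cause no trouble, as reductions never enlarge an event set). For the fault-tolerance part, suppose at most $f$ machines of $\{P_1,\dots,P_i\}\cup\mathcal{F}_i$ crash and let $g$ of them lie in $\{P_1,\dots,P_{i-1}\}$; I claim every state can be recovered. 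The key step is to recover the state of the bookkeeping machine $Q_{i-1}$. If $g=0$, all of $P_1,\dots,P_{i-1}$ survive and, as $Q_{i-1}\le R_{i-1}$, their states determine the state of $Q_{i-1}$; moreover the crashes are then confined to $\{P_i\}\cup\mathcal{F}_i$, so the two-machine system $\{P_i,Q_{i-1}\}\cup\mathcal{F}_i$ — an \fusion{f}{f} of $\{P_i,Q_{i-1}\}$ by Theorem~\ref{thEventState} — sees those $\le f$ failures and recovers the states of $P_i$ and all of $\mathcal{F}_i$. If $g\ge1$, then that same two-machine system sees at most $(f-g)+1\le f$ failures once we treat $Q_{i-1}$ itself as failed alongside the $f-g$ genuine crashes among $P_i$ and $\mathcal{F}_i$, so by Theorem~\ref{thFaultTolerance} the states of $Q_{i-1}$, of $P_i$, and of every machine in $\mathcal{F}_i$ are recoverable. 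In both cases we now know the state of $Q_{i-1}$ and hence, since each machine of $\mathcal{F}_{i-1}$ is $\le Q_{i-1}$, the states of all of $\mathcal{F}_{i-1}$; so the system $\{P_1,\dots,P_{i-1}\}\cup\mathcal{F}_{i-1}$ is missing only its $g\le f$ crashed primaries, and $H_{i-1}$ recovers $P_1,\dots,P_{i-1}$. Together with the already-recovered $P_i$ and $\mathcal{F}_i$ this determines every state, so $H_i$ holds, and $H_n$ is the theorem.

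The only delicate point, and the one I would be most careful about, is that $Q_{i-1}$ is a bookkeeping artifact rather than a deployed machine: the whole argument hinges on being able to reconstruct its state, which is what "hands off" fault tolerance from one fusion to the next, and the fault budget closes precisely because $Q_{i-1}$ is recovered either for free (when no $P_j$ with $j<i$ crashed) or by spending exactly one of the $f$ slots of the fresh two-machine fusion (when some $P_j$ did) — an exhaustive split. I would also verify two easily overlooked facts: that \emph{genFusion}'s correctness proof only uses that its $f$ outer iterations each add a machine raising $\w$ by at least one from a starting value $\ge1$, so it applies verbatim to the two-element "primary" set $\{P_i,Q_{i-1}\}$; and that the reachable cross product of machines all below $T$ is again below $T$, which I used to keep $Q_{i-1}$ and $\RCP(\{P_i,Q_{i-1}\})$ inside the closed partition set of $R_i$. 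An equivalent argument works directly with edge weights in $G(R_i,\cdot)$ via a short case split on whether the two $R_i$-states in question survive projection to $R_{i-1}$ and to $\RCP(\{P_i,Q_{i-1}\})$; there the essential use of $|\mathcal{F}_{i-1}|=f$ is to force that at least one true primary still separates the pair whenever $Q_{i-1}$ does.
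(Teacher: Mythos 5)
Your proof is correct and follows essentially the same route as the paper's: induction on $i$, treating the bookkeeping machine $\RCP(\mathcal{F}_{i-1})$ as a virtual primary whose state is first recovered via the fresh fusion $\mathcal{F}_i$ and then used to hand recovery off to the previous level's fusion. Your explicit case split on whether any of $P_1,\dots,P_{i-1}$ crashed (charging $Q_{i-1}$ as a failure only when its state cannot simply be read off the surviving primaries) makes rigorous a fault-budget accounting that the paper's own, terser proof leaves implicit.
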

\begin{proof} 
We prove the theorem using induction on the variable $i$ in the algorithm. For the base case, i.e.,
$i=2$, $\m{N}=\{P_1,P_2\}$ (since $RCP(\{P_1\})=P_1$). Let the \fusion{f}{f} generated by the
\emph{genFusion} algorithm for 
$\m{N}=\{P_1,P_2\}$ be denoted
$\mathcal{F}^1$. For $i=3$, let the \fusion{f}{f} generated for $\m{N}=\{P_3, \RCP(\mathcal{F}^1)\}$ be
denoted $\mathcal{F}^2$. We show that $\mathcal{F}^2$ is an \fusion{f}{f} of $\{P_1,P_2,P_3\}$.
Assume $f$ crash faults among $\{P_1P_2,P_3\} \cup \m{F}^2$. Clearly, less than or equal to $f$ machines in $\{P_3\} \cup
\mathcal{F}^2$ have crashed. Since $\m{F}^2$ is an \fusion{f}{f} of $\{P_3, 
\RCP(\m{F}^1)\}$, we can generate the state of all the machines in $\RCP(\m{F}^1)$ 
and the state of the crashed machines among $\{P_3\} \cup
\mathcal{F}^2$. Similarly, less than or equal to $f$ machines have crashed among $\{P_1,P_2\}$.
Hence, using the state of the available machines among $\{P_1,P_2\}$ and
the states of all the
machines in $\mathcal{F}^1$ we can generate the state of the crashed machines among $\{P_1,P_2\}$. 

Induction Hypothesis: Assume that the set of machines $\mathcal{F}^i$, generated in iteration $i$, is an
\fusion{f}{f} of $\{P_1 \ldots P_{i+1}\}$.  Let the \fusion{f}{f} of
$\{P_{i+2}, \RCP(\mathcal{F}^i)\}$ generated in iteration $i+1$ be denoted $\m{F}^{i+1}$. To prove: $\mathcal{F}^{i+1}$ is an \fusion{f}{f} of $\{P_1
\ldots P_{i+2}\}$. The proof is similar to that for the base case. Using the state of the available machines
in $\{P_{i+2}\} \cup
\mathcal{F}^{i+1}$, we can generate the state of all the machines in $\mathcal{F}^{i}$ and $\{P_{i+2}\} \cup
\mathcal{F}^{i+1}$. Subsequently, we can generate 
the state of the crashed machines in $\{P_1 \ldots P_{i+1}\}$. 
\end{proof}

From observation \ref{obGenFusTc}, the \emph{genfusion} algorithm has time complexity,\\
$O(fN^4|\Sigma|+fN^5)$ (assuming $\State=0$ and $\Event=0$ for simplicity). Hence, if the size of
$\m{N}$ in the $i^{th}$
iteration of the \emph{incFusion} algorithm is denoted by $N_i$, then the time complexity of the
\emph{incFusion} algorithm, $T_{inc}$ is given by the
expression $\Sigma_{i=2}^{i=n}O(fN_i^4|\Sigma|+fN_i^5)$. 

Let the number of states in each primary be $s$. For $i=2$, the primaries are $\{P_1,P_2\}$ and
$N_1=O(s^2)$. For $i=3$, the primaries are \{$\RCP(\m{F}^1),P_3\}$. Note that $\RCP({\m{F}^1)}$ is also a fusion machine. Since
we assume an average reduction of $\rho$ (size of $\RCP$ of primaries/average size of each fusion),
the number of states in $\RCP(\m{F}^1)$ is $O(s^2/\rho)$. So , $N_2=O(s^3/\rho)$. Similarly,
$N_3=O(s^4/\rho^2)$ and $N_{i}=O(s^{i+1}/\rho^{i-1})$. So,
$$T_{inc}=O(|\Sigma|f\Sigma_{i=2}^{i=n}s^{4i+4}/\rho^{4i-4}+f\Sigma_{i=2}^{i=n}s^{5i+5}/\rho^{5i-5})$$
$$= O(|\Sigma|fs^4\rho^4\Sigma_{i=2}^{i=n}(s/\rho)^{4i}+fs^5\rho^5\Sigma_{i=2}^{i=n}(s/\rho)^{5i})$$ 
This is the sum of a geometric progression and hence, $$T_{inc} =
O(|\Sigma|fs^4\rho^4(s/\rho)^{4n}+fs^5\rho^5(s/\rho)^{5n})$$ 
Assuming $\rho$ and $s$ are constants,
$T_{inc}=O(f|\Sigma|s^n/\rho^n+fs^n/\rho^n)$. Note that, the time complexity of the
\emph{genFusion} algorithm in Fig. \ref{figFusionAlgo} is $O(f|\Sigma|s^n+ fs^n)$. Hence, 
the \emph{incFusion} algorithm achieves $O(\rho^n)$  savings
in time complexity over the $\emph{genFusion}$ algorithm.

\end{document}